\newcommand{\Hy} {\mathscr H}
\newcommand{\Gr} {\mathscr G}
\mathchardef\UrlBreakPenalty=32767
\begin{document}

\title[Consensus dynamics on temporal hypergraphs]{Consensus dynamics on temporal hypergraphs}

\author{Leonie Neuhäuser}
\authornotemark[1]
\email{neuhaeuser@cs.rwth-aachen.de}
\affiliation{%
  \institution{RWTH Aachen University}
  \country{Germany}
}
\author{Renaud Lambiotte}
\authornotemark[2]
\email{lambiotte@maths.ox.ac.uk}
\affiliation{%
  \institution{University of Oxford}
  \country{UK}
}
\author{Michael T. Schaub}
\authornotemark[3]
\email{schaub@cs.rwth-aachen.de}
\affiliation{%
  \institution{RWTH Aachen University}
  \country{Germany}
}

\renewcommand{\shortauthors}{Neuhäuser, Lambiotte, Schaub}

\begin{abstract}
We investigate consensus dynamics on temporal hypergraphs that encode network systems with time-dependent, multi-way interactions. 
We compare this dynamics with that on appropriate projections of this higher-order network representation that flatten the temporal, the multi-way component, or both. 
For linear average consensus dynamics, we find that the convergence of a randomly switching time-varying system with multi-way interactions is slower than the convergence of the corresponding system with pairwise interactions, which in turn exhibits a slower convergence rate than a consensus dynamics on the corresponding static network.
We then consider a nonlinear consensus dynamics model in the temporal setting.
Here we find that in addition to an effect on the convergence speed, the final consensus value of the temporal system can differ strongly from the consensus on the aggregated, static hypergraph. 
In particular we observe a first-mover advantage in the consensus formation process: If there is a local majority opinion in the hyperedges that are active early on, the majority in these first-mover groups has a higher influence on the final consensus value --- a behaviour that is not observable in this form in projections of the temporal hypergraph. 

\end{abstract}

\def\UrlFont{\rmfamily\small}
\mathchardef\UrlBreakPenalty=10000

\maketitle
\section{Introduction}
\label{sec:introduction}
In a number of application scenarios, including the study of epidemic spreading and opinion formation in social networks, we are interested in understanding a dynamical process taking place over a time-varying network structure. Such system can often be modelled as time-switched system of the form
$\dot{x}(t) = M^{(\sigma (t))}x(t)$, where $\sigma(t): \mathbb{R}\rightarrow \mathbb{N}$ is a switching signal which selects an interaction matrix $M^{(i)}\in \mathcal{M}$ from a set of matrices $\mathcal M$ for any point in time.

Thus the sequence of operators $M^{(0)}, \cdots, M^{(r-1)}$ determines together with the switching signal $\sigma(t)$ the evolution of the node state vector $x(t)\in \mathbb{R}^n$ over time.
Here, the form of the operator $M^{(i)}$ is determined both by the \emph{type of dynamics} (e.g. diffusion, consensus, synchronisation, etc.) and the \emph{topology} of the possible interactions.

Concrete examples of dynamical processes that may be modelled within this framework include communication patterns from phone calls~\cite{karsai_correlated_2012} or collective behaviour of animals~\cite{zschaler_adaptive-network_2012} and human navigation~\cite{singer_detecting_2014}. 
In this context, the sequence $M^{(0)}, \cdots, M^{(r-1)}$ of time-stamped interactions are typically referred to as temporal networks and encode pairwise interactions between nodes.
A large body of research has focused on elucidating how the temporal switching of the topology may impact the resulting dynamics compared to the scenario in which all the appearing edges are present simultaneously (in a so-called aggregated network model).
Notably, the temporally limited presence and order of interactions between nodes in a temporal network can create dependencies in the interaction paths of the nodes which are not described well by the aggregated network.
As a result, some nodes may be able to interact in the aggregate network, while they cannot interact in the temporal network. 
Dynamical processes evolving over the aggregated network can thus heavily differ from the dynamics supported on the temporal topology. 
For networks, it has been shown that e.g. diffusion can either speed up or slow down~\cite{masuda_temporal_2013, scholtes_causality-driven_2014} depending on the structural and temporal properties of the network.

This former research has broadly focused on the case where the matrices $M^{(i)}$ encode interactions takings place over a graph~\cite{masuda_temporal_2013}. 
In this setup, the interactions between two nodes are assumed to be pairwise which can be a limiting assumption, in general.
Recently there have been a number of works on extending graph-based models to multi-way interaction frameworks such as hypergraphs or simplicial complexes account for group interactions \cite{schaub_random_2018,mukherjee_random_2016,parzanchevski_simplicial_2017,muhammad_control_nodate,petri_simplicial_2018, masulli_algebro-topological_nodate,carletti_random_2020,neuhauser_multibody_2020}.  
These can appear in different situations ranging from social and collaborative settings, where people interact in groups rather in pairs~\cite{patania_shape_2017} to joint neuronal activity in brains \cite{petri_homological_2014}. 
Specifically, for dynamics on hypergraphs, it was shown that higher-order interactions can significantly modify the overall dynamical process in comparison to the corresponding dynamical process supported on a graph.
However, the focus of all these works was on multi-way interactions with a (temporally) fixed topology.

Combining the above two research strands, in this work we are interested in time-switched dynamics evolving over temporal hypergraphs. 
The issue of temporal ordering of interactions, which is at the core of the study of temporal networks, is also present in general interaction topologies described by hypergraphs, as the example in \Cref{aggreagetemporal} shows.
Thus (higher-order) dynamical effects may arise both due to a temporal ordering of the interactions and due to the presence of multi-way interactions.  It is however an open problem to look at the interaction effects of these temporal and topological dimensions.
\begin{figure}
\centering
\includegraphics[width=\columnwidth]{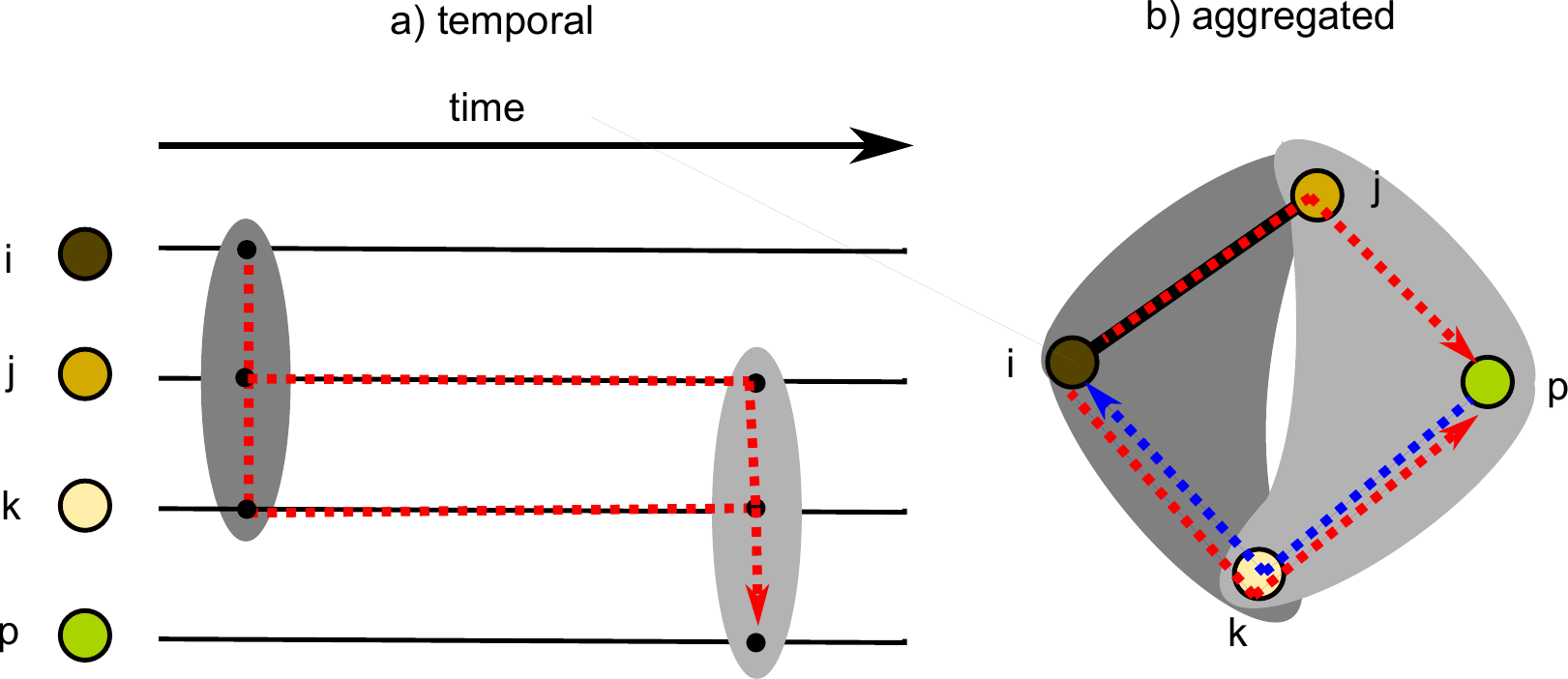}
\caption[Temporal and aggregate topology]{In the temporal topology, only the indirect interaction $i \rightarrow p$ is possible via two possible temporal paths (visualised by the red arrows), but the backward flow of information $p \rightarrow i$ (blue arrow) not due to the temporal order of the interactions. Nevertheless, both indirect interactions are present in the aggregate topology. }
\label{aggreagetemporal}%
\end{figure}
In this paper, we thus investigate the combined effect of these two types of higher-order for networks, their temporality and their multi-way interactions. In each case, some form of projection can be used to flatten the higher-order representation. In the case of temporal networks, it is standard to aggregate different time windows to build a static, often weighted, network representation. In the case of hypergraphs, different types of reductions, e.g. the clique reduced multi-graph, can be used to transform the multi-way representation into a two-way, also often weighted network model. Here, we consider temporal hypergraphs, for which two types of projection can thus be constructed. We will use the term aggregate when referring to a projection in the temporal dimension, and reduced when referring to a projection in the multi-way dimension. When assessing the impact of higher-order, we will compare the dynamics on temporal hypergraphs and these different types of projections.
\paragraph{Contributions}
In particular, we extend the line of higher-order model research by investigating continuous dynamics on temporal hypergraphs and thus examine the interaction effect of two types of higher-order facets.
First, we consider linear dynamics on temporal hypergraphs, more specifically linear consensus dynamics. 
For this, we extend the random, time-switched model in \cite{masuda_temporal_2013} using the fact that linear dynamics on hypergraphs can always be rewritten as an equivalent pairwise dynamics on an appropriately weighted network \cite{neuhauser_multibody_2020,neuhauser2021consensus}. 
We observe an effect on the convergence speed of the system: consensus dynamics with 3-way (time-varying) interactions are slowed down in comparison to (time-varying) pairwise interactions.
This happens in addition to the slow-down that can be observed when considering consensus dynamics on temporal vs aggregated, static networks. 
% \leonie{We thus find an interaction effect of these two higher-order dimensions already in the case of linear dynamics.}

Second, we go beyond linear interactions and investigate nonlinear consensus dynamics on temporal hypergraphs. 
In particular, we extend the nonlinear consensus model introduced in \cite{neuhauser_multibody_2020} to temporal settings. 
We observe, that there is not only a temporal effect on the convergence speed but also on the consensus value of the system, which differs from the consensus on the aggregated, static hypergraph. 
More specifically, we can observe a first-mover advantage.
If there is a local majority in the hyperedges which are active early on, the majority in these first-mover groups has a higher influence on the final consensus value --- a behaviour that is not observable in this form in projections of the temporal hypergraph. 

\paragraph{Outline}
The rest of the paper is structured as follows. 
In \Cref{sec:preliminaries}, we first introduce the notation and give an overview about related work. In \Cref{sec:model} we then introduce the general consensus model for static hypergraphs that we will extend to temporal settings and the time-switched system model that we use for the analysis of the linear consensus dynamics on temporal hypergraphs and state general results. Building on these results, in \Cref{sec:linear} we discuss the influence of temporal ordering in the case of linear dynamics on hypergraphs. We then go beyond linear dynamics and focus on nonlinear consensus dynamics on hypergraphs in \Cref{sec:nonlinear}. 
Finally, we discuss our findings and future research in \Cref{sec:conclusion}.
\section{Preliminaries}
\label{sec:preliminaries}
\subsection{Notation}
Let $\Gr$ be an undirected graph consisting of a set $V(\Gr)=\{1,\dots,N\}$ of $N$ nodes connected by a set of edges $E(\Gr)=\{\{i,j\}: i,j \in V(\Gr)\}$, described by unordered tuples of nodes. 
The structure of the network can  be  represented by the adjacency matrix $A \in  \mathbb{R}^{N \times N}$ with entries
\begin{align}
	A_{ij}=\begin{cases}1 & (i,j) \in E(\Gr ) \\
		0 & \text{ otherwise}.\end{cases}
\end{align}
In the case of undirected networks, the adjacency matrix $A$ is symmetric.
To encode possible multi-way interactions in a dynamical system we use a hypergraph $\Hy$.
A hypergraph $\Hy$ consist of a set $V(\Hy) = \{ 1, 2, \hdots, N \}$ of $N$ nodes, and a set $E(\Hy) = \{ E_1, E_2, \hdots, E_K\}$ of $K$ hyperedges. 
Each hyperedge $E_\alpha$ is a subset of the nodes, i.e. $E_\alpha \subseteq V(\Hy)$ for all $\alpha =1,2,\hdots,K$, where each hyperedge may have a different cardinality $|E_\alpha|$. 
A graph is thus simply a hypergraph constrained to contain only $2$-edges. In this work, we will focus on comparing graphs and hypergraphs with hyperedges of cardinality $|E_\alpha|=3$. 
The structure of a hypergraph can  be  represented by an adjacency tensor, in the case of a 3-hypergraph it is given by $\mathbf{A} \in  \mathbb{R}^{N \times N \times N}$ with entries
\begin{align}
	\mathbf{A}_{ijk}=\begin{cases}1 & \{i,j,k\} \in E(\Hy ) \\
		0 & \text{ otherwise}.\end{cases}
\end{align}

\subsection{Related work}
\subsubsection{Dynamics on temporal networks}
Several works have investigated the influence of temporality on dynamics on networks. As a result, it has been shown that temporal orderings can either speed up or slow down diffusion \cite{masuda_temporal_2013, scholtes_causality-driven_2014}, depending on the particular structure of the network. 
Therefore, higher-order models \cite{rosvall_memory_2014, scholtes_causality-driven_2014} have been developed to extend static (or aggregated) network models in order to account for temporal effects. 

The work of Masuda et. al. \cite{masuda_temporal_2013} is particularly relevant for our setting. 
The authors investigated general interaction sequences to examine temporal effects on linear diffusion dynamics mediated by pairwise interactions. 
The interaction matrix $M^{(i)}$ in their case thus becomes the negative graph Laplacian of a single edge-coupling.  
The main result of \cite{masuda_temporal_2013} is that diffusive dynamics are generally slower in temporal networks.
In the first part of this work we generalise these results from diffusive edge couplings to general interaction matrices with certain algebraic properties in order to investigate (linear) consensus dynamics on hypergraphs. 

\subsubsection{Dynamics on hypergraphs}
Recent works have focused on investigating dynamics on higher-order topologies~\cite{carletti_dynamical_2020}, and a variety of dynamical processes such as diffusion~\cite{schaub_random_2018,carletti_random_2020}, synchronization~\cite{skardal_abrupt_2019} or social and opinion dynamics~\cite{neuhauser_multibody_2020,sahasrabuddhe_modelling_2020,iacopini_simplicial_2019} have been extended to multi-body frameworks. 
As a result, it was shown that higher-order interactions can significantly modify the dynamical process in comparison to the dynamical process on the underlying reduced network. 

In previous work~\cite{neuhauser_multibody_2020,neuhauser2020opinion,sahasrabuddhe_modelling_2020,neuhauser2021consensus} we showed that linear dynamics on hypergraphs can always be rewritten in terms of the underlying network, when properly rescaled. 
Nonlinear interactions are thus necessary in order to reveal dynamical effects that are different from the reduction of the dynamics to a network setting.

\subsubsection{Research on temporal hypergraphs}
Recent work which is extending temporal networks to multi-way interaction frameworks such as hypergraphs has mostly focused on analysing the evolution and formation of groups in empirical datasets~\cite{cencetti_temporal_2021,benson_simplicial_2018}. 
However, work on the modeling of dynamics on temporally evolving hypergraphs is still very limited and, to the best of our knowledge, only exists for the case of discrete dynamics of simplicial contagion \cite{chowdhary_simplicial_2021}.
In this work, we thus extend the literature by considering continuous dynamical processes, in particular general consensus dynamics, taking place on temporal hypergraphs and examine the effects of temporal ordering of multi-body interactions on the overall dynamics.

\section{Consensus dynamics on temporal hypergraphs}
\label{sec:model}

In this section we first recap our nonlinear consensus model~\cite{neuhauser_multibody_2020,neuhauser2021consensus} on hypergraphs and then extend its definition to temporal hypergraphs.
For simplicity, we will concentrate in our exposition on 3-regular hypergraphs, i.e., hypergraphs in which all (group) interactions take place between 3 nodes.
However, our formulation can easily be translated to hypergraphs with hyperedges of different (and possibly mixed) order.

\subsection{Dynamics on static hypergraphs}
We consider dynamics on hypergraphs that emerge from a linear combination of (nonlinear) dynamical interactions mediated by the hyperedges.
Specifically, the dynamics of each node $i$ is governed by the following differential equation:
\begin{equation}\label{eq:general_dynamics}
        \dot{x}_i =   \sum_{j,k} \mathbf{A}_{ijk} f_i^{\{ j,k\} }(x_i,x_j, x_k),
	\end{equation}
where for each hyperedge $(i,j,k)\in E(\Hy)$ the \emph{interaction function} $f_i^{\{ j,k\} }(x_i,x_j, x_k)$ describes the joint influence of nodes $j$ and $k$ on node~$i$.
Specifically, we will be primarily concerned with the 3-way consensus model (3CM) that we introduced in recent work~\cite{neuhauser_multibody_2020}, which follows the scheme of~\cref{eq:general_dynamics} and considers an interaction function is of the form:
\begin{align}
    f_i^{\{j,k\}}(x_i,x_j,x_k) =  \frac{1}{2}\,s \left( \left| x_j-x_k\right|\right) \,\left((x_j-x_i)+(x_k-x_i)\right).
    \label{eqn:our_function}
\end{align}
For each 3-edge $\{i,j,k\}$, the multi-way influence of nodes $j$ and $k$ on node $i$ by the standard linear term $\left((x_j-x_i)+(x_k-x_i)\right)$ 
are modulated by a \emph{scaling function} $s\left(\left|x_j-x_k\right|\right)$ of their state differences.

By choosing the scaling function $s$ in~\cref{eqn:our_function} appropriately, we can create nonlinear dependencies between the node states, such that the influence of nodes $j,k$ on $i$ is modulated by how closely aligned the (opinion) states of node $j$ and node $k$ are.
As we want to be able to model a reinforcement effect for nodes with a similar opinion, a natural choice for the scaling function $s(x)$ is
\begin{align}
	\label{eqn:exponential}
	s \left( \left| x_j-x_k\right|\right)=\exp\left(\kappa \left|x_j-x_k\right|\right),
\end{align}
where the sign of the parameter $\kappa$ determines if the function monotonically decreases or increases. 
We will use this specific form of $s$ in our simulations, but note that other choices of are possible and have been explored \cite{neuhauser_multibody_2020,neuhauser2020opinion,sahasrabuddhe_modelling_2020,neuhauser2021consensus}.

For our work here it will be instrumental to rewrite the dynamics in~\eqref{eq:general_dynamics} with interaction function~\cref{eqn:our_function} in the form:
\begin{align}
\label{eq:ren}
	\dot{x}_i & = \frac{1}{2}\sum_{jk}\mathbf{A}_{ijk} \,s \left( \left| x_j-x_k\right|\right)\,((x_j-x_i)+(x_k-x_i)) \nonumber \\
	          & = \sum_{jk}\mathbf{A}_{ijk}\, s \left( \left| x_j-x_k\right|\right)\, (x_j -x_i)      
	          = \sum_{j}\mathcal{W}_{ij}(x_j-x_i)						
\end{align}
where we have defined the state-dependent interaction matrix $\mathcal{W}$ whose entries $\mathcal{W}_{ij} =\sum_{k}\mathbf{A}_{ijk}s \left( \left| x_j-x_k\right|\right)$ measure the influence of node $j$ on node $i$ (mediated via the 3-way interactions).
We emphasize that~\cref{eq:ren} does \emph{not} describe a linear pairwise system, as the state-dependency of $\mathcal{W}$ renders the system nonlinear, i.e., $\mathcal{W}$ is a function of the state vector $x(t)$, which is dependant on the 3-way interaction topology.
Nonetheless, the above rewriting enables us to write the system in the compact format:
\begin{equation}
    \dot{x} = -\mathcal{L} x,
\end{equation}
where we have defined the (nonlinear) Laplacian $\mathcal L = \mathcal D - \mathcal W$ where $\mathcal D$ is a diagonal matrix with entries $\mathcal D_{ii} = \sum_j \mathcal W_{ij}$.
(To keep the notation compact, we omit the state dependency of these matrices and use caligraphic symbols to denote state-dependent matrices in the following).

Note that, unless the scaling function $s$ is constant, and thus the interaction function~\cref{eqn:our_function} is \emph{linear}, the system cannot be written as a dynamics over fixed graph with pairwise interactions (see~\cite{neuhauser2020opinion,neuhauser2021consensus} for a more detailed discussion). 
However, when the scaling function is constant $s(x)=1$, the dynamics can be reduced to a linear dynamics on an (effective) static weighted graph.
\begin{equation}
	\dot{x}_i = \frac{1}{2} \sum_{jk} \mathbf{A}_{ijk}\left( x_j - x_i + x_k - x_i \right) =-  \sum_j L_{ij} x_j.
\end{equation}
where $L=D-W$ is equivalent to the (state independent) \emph{motif Laplacian} for triangles which was first introduced in \cite{benson_higher-order_2016}, whose entries are defined analogously to $\mathcal L$.
Note that $L$ is simply the standard Laplacian for a graph with adjacency matrix 
\begin{align}
\label{eqn:rescale}
W_{ij}&=  \sum_{k} \mathbf{A}_{ijk},
\end{align}
which corresponds to a rescaled graph, obtained from weighting each interaction between two nodes by the number of jointly incident 3-way hyperedges. 
Therefore, a reduction of the dynamics on a hypergraph to a (pairwise) dynamics on a graph is possible without loss of information for linear dynamics.

\subsection{Dynamics on temporal hypergraphs}
Let us now consider the 3CM model defined on a temporal hypergraph described via a sequence of adjacency tensors $\mathbf{A}^{(1)},\mathbf{A}^{(2)},\ldots$ that each describes the hypergraph topology for a time-period with length $\tau$, respectively.
Using the above discussed (state-dependent) interaction matrices, we can write this temporal 3CM model as
\begin{subequations}\label{time-switched-nonlinear}
\begin{align}
    \dot{x}(t) &= -\mathcal{L}^{(1)}x(t) &0 \leq t \leq \tau, \\
    \dot{x}(t) &= -\mathcal{L}^{(2)}x(t) &\tau \leq t \leq 2\tau, \\
  \vdots & \nonumber\\
  \dot{x}(t) &= -\mathcal{L}^{(r)}x(t) &(r-1)\tau \leq t \leq r\tau,\\
  \text{with } &\quad  x(0) = x_0,
\end{align}
\end{subequations}
where $\mathcal{L}^{(\ell)}_{ij} = \sum_{jk} \mathbf A_{ijk}^{(\ell)}\; s\left( \left| x_k(t)-x_j(t)\right|\right)$ for $(\ell-1) \tau \leq t \leq \ell\tau$. 

Note that the ordering of the temporal hypergraph does not completely determine the specific interaction matrices at time $t$ in~\cref{time-switched-nonlinear}, since the initial condition will influence the specific nonlinear Laplacians $\mathcal{L}^{(i)}$ as well due to the state dependency of the operators at each time-point $t$.
This makes an analytical examination of the general dynamical process difficult. 
We can however gain some analytical insights if we limit ourselves to linear interaction dynamics, in which case we still have a temporal dependency in the interaction topology, and a possible influence of multi-way as opposed to pairwise interactions; however, as the hypergraph dynamics reduce to (effectively) weighted network dynamics in this case~\cref{time-switched-nonlinear} reduces to a time-varying linear system of the form:
\begin{subequations}\label{eq:time_switched_linear}
\begin{align}
    \dot{x}(t) &= -L^{(1)} x(t) &0 \leq t \leq \tau, \\
    \dot{x}(t) &= -L^{(2)}x(t) &\tau \leq t \leq 2\tau, \\
  \vdots & \nonumber\\
  \dot{x}(t) &= -L^{(r)}x(t) &(r-1)\tau \leq t \leq r\tau,\\
  \text{with } &\quad  x(0) = x_0.
\end{align}
\end{subequations}
which can be solved as
 \begin{align}
x(r\tau)=\exp(-\tau L^{(r-1)}) \cdots \exp(-\tau L^{(0)})x_0.
\end{align}

Under some mild assumptions of the connectivity of the underlying system we can guarantee that the above system converges to an average consensus state eventually \cite{olfati-saber_consensus_2007}.

\subsection{Dynamics on random temporal hypergraphs vs. time-aggregated hypergraphs}
Due to the temporal ordering of the interactions, the convergence behaviour of the consensus dynamics on a general temporal hypergraph can be strongly altered even in the linear case.
This makes a comparison of the average behaviour of a consensus dynamics on a hypergraph to that of a corresponding dynamics on a temporal network modelled by a graph cumbersome. Since a general comparison between the temporal and the aggregated, static model is not possible, we restrict ourselves to study the effects of the temporal behaviour of~\cref{eq:time_switched_linear} over one time window of size $\tau$. To eliminate the effects of a specific ordering we assume that the hypergraph in this window is drawn at random from a multiset of $r$ (possibly repeating) Laplacian matrices $\mathbb{L} = \{L^{(1)},\ldots,L^{(r)}\}$. 
A similar setup was studied for dynamics with diffusive pairwise couplings in \cite{masuda_temporal_2013}. 
The expected state of this random temporal hypergraph system after one time-period of length $\tau$ is then given by
\begin{align}\label{eq:random_hypergraph}
    \langle x(\tau) \rangle = \frac{1}{|\mathbb{L}|} \sum_{\ell=1}^r\exp\left(-\tau L^{(\ell)}\right) x(0)= \hat{Z}(\tau)x(0).
\end{align}

To gain insight about the effect the time-varying topology, the expected dynamics on the random temporal hypergraph can be contrasted with the dynamics on the static, time-aggregated topology:
\begin{align}\label{eq:time_aggregated}
    x(t)=-L^*x(t), \quad \text{with} \quad L^*=\frac{1}{|\mathbb{L}|}\sum_{\ell=1}^{r} L^{(\ell)}
\end{align}  
We can now compare the (linear) dynamical processes governed by the time-aggregated hypergraph~\cref{eq:time_aggregated} and the expected outcome of the dynamics~\cref{eq:random_hypergraph} on a random temporal hypergraph after time period $\tau$.
Note that the latter can equivalently be interpreted as the state vector $x(\tau)$ at time $\tau$ of a linear dynamical system  of the form:
\begin{equation}\label{eq:expected_value}
    \dot{x}(t)=-\hat{L}x(t) \quad \text{with} \quad  \hat{L}=\tau^{-1}\ln(\hat{Z}(\tau)),
\end{equation}
where $\hat L$ is an effective Laplacian interaction matrix for one time-period $\tau$. Note that, as said before, this is only the solution for a specific time-point $t=\tau$ for which we compare the aggregated and temporal dynamics. More details on this construction can be found in the appendix in \Cref{apx:Lhat}.

Whenever $L^* \neq \hat{L}$, the expected dynamics on the temporal hypergraph differ from the dynamics on the time-aggregated topology~(cf.~\cite{masuda_temporal_2013} for the case of graphs). 
The reason for this difference is that averaging and integrating the dynamics does not necessarily commute, as the example in \Cref{aggreagetemporal} indicates as well.
The difference between these two dynamics only vanishes in special cases, e.g, when each pair of matrices $L^{(i)}, L^{(j)}$ commutes~\cite{masuda_temporal_2013}. 
Nonetheless, in certain settings, we can derive an analytical relationship between the eigenvalues of $\hat{L}$ and $L^*$ as the following theorem shows.
Though we are primarily concerned with Laplacian dynamics here, we state this result here for general interaction matrices, which extends the special case considered in~\cite{masuda_temporal_2013} for matrices describing sequences of edges.
\begin{theorem}[Eigenvalue relation between time-aggregated and random temporal interactions]
\label{eigenvaluerelation}
Consider a multiset $\mathbb{M}=\{M^{(1)},\ldots,M^{(r)}\}$ of interaction matrices for which all $M \in \mathbb{M}$ fulfill the property $M^2=cM$ for a constant $c$. 
Then we can relate the eigenvalues $\hat{\mu}(\mathbb{M})$  of the effective temporal matrix 
$$\hat{M} = \tau^{-1} \ln\left( \frac{1}{|\mathbb{M}|} \sum_{\ell=1}^r\exp\left(\tau M^{(\ell)}\right)\right)$$
and the eigenvalues $\mu^*(\mathbb{M})$ of the aggregate matrix $$M^* = \frac{1}{|\mathbb{M}|}\left( \sum_{\ell=1}^rM^{(\ell)}\right)$$ by the following relationship: 
\begin{align}
    \hat{\mu}(\mathbb{M})=\tau^{-1}\ln\left(1+\alpha(c,\tau) \mu^*(\mathbb{M})\right)=:f_c(\mu^*,\tau)
\end{align}
with $\alpha(c,\tau) = (\exp(c\tau)-1)/c$.
\end{theorem}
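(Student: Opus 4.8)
The plan is to exploit the algebraic constraint $M^2=cM$ to obtain a closed form for the matrix exponential $\exp(\tau M)$ that is \emph{linear} in $M$, after which the statement reduces to the standard spectral mapping between a matrix and an analytic function of it. The point of the hypothesis $M^2=cM$ is precisely that it collapses the exponential series to an affine expression, and this is what carries the entire argument.

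First I would establish by induction that $M^n = c^{n-1}M$ for every integer $n\geq 1$: the base case is the hypothesis, and the step uses $M^{n+1}=M\cdot M^n = c^{n-1}M^2 = c^n M$. Substituting into the exponential series then gives
\begin{align}
\exp(\tau M) = I + \sum_{n\geq 1}\frac{\tau^n}{n!}M^n = I + \frac{1}{c}\left(\sum_{n\geq 1}\frac{(c\tau)^n}{n!}\right) M = I + \alpha(c,\tau)\,M,
\end{align}
with $\alpha(c,\tau)=(e^{c\tau}-1)/c$; the degenerate case $c=0$ (i.e.\ $M^2=0$) is recovered as the limit $\alpha\to\tau$, for which $\exp(\tau M)=I+\tau M$ holds directly since all higher powers of $M$ vanish.

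The crucial payoff of this linearization is that it sidesteps any non-commutativity of the $M^{(\ell)}$: averaging over the multiset commutes with the now-affine exponential, so
\begin{align}
\frac{1}{|\mathbb{M}|}\sum_{\ell=1}^r \exp(\tau M^{(\ell)}) = I + \alpha(c,\tau)\,\frac{1}{|\mathbb{M}|}\sum_{\ell=1}^r M^{(\ell)} = I + \alpha(c,\tau)\,M^*.
\end{align}
Taking the logarithm yields $\hat{M}=\tau^{-1}\ln\!\left(I+\alpha(c,\tau)M^*\right)$, so that $\hat{M}$ is an analytic function of $M^*$ alone.

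Finally I would invoke the spectral mapping property: if $\mu^*$ is an eigenvalue of $M^*$ with eigenvector $v$, then $(I+\alpha M^*)v=(1+\alpha\mu^*)v$ and hence $\hat{M}v=\tau^{-1}\ln(1+\alpha\mu^*)\,v$, which gives the claimed relation $\hat{\mu}=f_c(\mu^*,\tau)=\tau^{-1}\ln(1+\alpha(c,\tau)\mu^*)$ and simultaneously shows that $\hat{M}$ and $M^*$ share eigenvectors. The one point requiring care --- and the main obstacle --- is the well-definedness of the matrix logarithm: one must verify that $I+\alpha(c,\tau)M^*$ has no eigenvalue on the nonpositive real axis, i.e.\ that $1+\alpha(c,\tau)\mu^*>0$ for every eigenvalue $\mu^*$. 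In the Laplacian setting of interest this holds because the relevant spectra keep the argument of the logarithm positive, but in the fully general statement it should be recorded as the implicit domain condition under which $f_c$ is real-valued.
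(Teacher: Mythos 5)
Your proof is correct and follows essentially the same route as the paper's: collapsing $\exp(\tau M)$ to $I+\alpha(c,\tau)M$ via $M^n=c^{n-1}M$, averaging (which now commutes with the affine exponential), taking the logarithm, and reading off the eigenvalue relation from the spectral mapping. Your two added remarks --- handling the degenerate case $c=0$ as the limit $\alpha\to\tau$, and flagging that the matrix logarithm requires $1+\alpha(c,\tau)\mu^*$ to avoid the nonpositive real axis --- are points the paper's proof silently glosses over, and are worth recording.
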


The proof is provided in the appendix (\Cref{apx:proofThm3.1}).
This general result allows us to investigate sequences of interaction matrix for which the property $M^2=cM$ holds. 
In particular, we will see that this is true for the linear consensus dynamics on temporal hypergraphs we consider in the next section.
\section{Linear consensus dynamics on temporal hypergraphs}
\label{sec:linear}

In this section, we investigate how our group-based consensus dynamics behaves for \emph{linear} consensus dynamics acting on a randomized temporal hypergraph as opposed to the corresponding time-aggregated hypergraph and in comparison to a temporal network setting.

\paragraph{Setup}
To elucidate the effect of the temporally changing topology, we consider the expected dynamics on a random temporal hypergraph (\cref{eq:expected_value}) in comparison to the baseline case a time-aggregated topology (\cref{eq:time_aggregated}).
We study these cases for synthetic systems in which groups of $d$ nodes, interact jointly in each time window, with a focus on two scenarios, which are displayed in \Cref{dgroup}.
First, we consider a system with pairwise interactions represented by a simple graph consisting of a clique comprising the $d$ nodes in each time window.
Second, we consider a setup in which the system is described by underlying three-way interactions. Due to the linearity of the dynamics, the three-way system can be reduced to an effective, weighted graph that describes the influence of the nodes onto each other.
Note that this reduction of our dynamical system on the hypergraph to an equivalent dynamics on an effective, weighted graph is possible for each time window of length $\tau$, as discussed in the previous section.
The convergence properties of such a linear dynamics are governed by the spectral properties of the associated linear operator.
This linear operator will correspond to a differently weighted graph for our model, depending on whether we consider the underlying system to have pairwise or three-way (or in general multi-way) interactions.
We can thus compare the convergences rates based on pairwise interactions, to the convergence rates we obtain when considering 3-way interactions.

\begin{figure}
 \centering
 \includegraphics[width=0.4\textwidth]{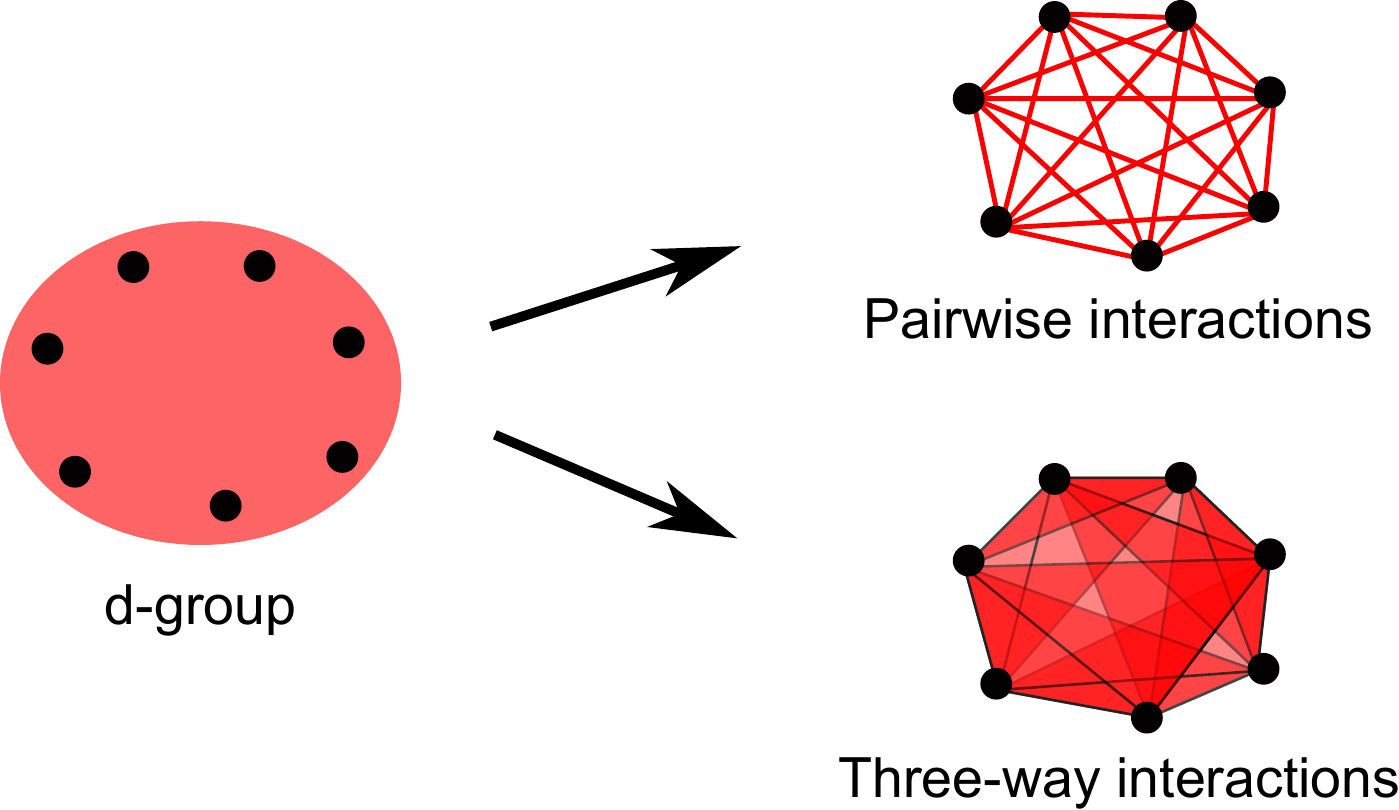}
 \caption[Representation of $d$-groups with pairwise or three-way interactions:]{\textit{Representing d-groups with pairwise or three-way interactions:} We consider groups of $d$ nodes, which interact jointly in each time window. We compare two settings: one in which each $d$-group is a fully connected graph, where all the pairs of distinct nodes are connected, and one in which it is a fully connected three-regular hypergraph, where all the triplets of distinct nodes are connected. Due to the linearity of the dynamics, the three-way system can be reduced to an effective, weighted graph.
}
  \label{dgroup}
 \end{figure}

We assume here that the effective Laplacian $\hat{L}$ (or equivalently of $L^*$) corresponds to a connected graph.
The second smallest eigenvalue of the Laplacian often called the spectral gap, characterises the relaxation time of the dynamical process, e.g. the speed in which the diffusion process reaches a stable state \cite{masuda_random_2017}. This speed increases with an increasing value of the spectral gap.
The eigenvalue relationship between the aggregate and the temporal matrix derived in \Cref{eigenvaluerelation} thus enables us to compare the relaxation time of the temporal and aggregate dynamics.

\paragraph{Pairwise Interactions}
\label{Chapter1Skeleton}
Let  us call a set of $d$ jointly interacting a $d$-group, and consider a multiset $\mathbb{G}_d = \{g_d^{(1)},\ldots,g_d^{(r)}\}$ consisting of $d$-groups.
If we consider a dynamics driven by pairwise interactions, each $d$-group $g_d^{(i)} \in \mathbb{G}_d$ may be suitably represented by a fully connected graph of $d$ nodes. 
A consensus dynamics on such a (motif) graph can then be written as $\dot{x} = - P_{d}^{(i)}x $, where we denote the graph Laplacian for this pairwise interaction model by the matrix $P_d^{(i)} \in \mathbb{R}^{N\times N}$, with entries:
\begin{align}\label{1-skeletonlaplacian}
    P_d^{(i)}= \begin{cases}
    d-1, &i=j \text{ and } i \in g_d\\
    -1 &i \text{ adjacent to } j\\
    0 &\text{else}
\end{cases}
\end{align}

Importantly, using direct computations, it can be shown that $[P_d^{(i)}]^2=dP_d^{(i)}$ for any $g_d^{(i)}$.
Thus, for a set $\mathbb{M}=\mathbb{P}_d$ of interaction matrices $M^{(i)}=-P_d^{(i)}$ for which $[M^{(i)}]^2 = -dM^{(i)}$ holds we can employ~\Cref{eigenvaluerelation} to obtain the following result:
\begin{corollary}
    For a pairwise consensus dynamics on a sequence of $d$-groups, given by $\mathbb{P}_d = \{-P_d^{(1)},\ldots,-P_d^{(r)}\}$, the relationship of the eigenvalues $\mu^*(\mathbb{P}_d)$ of the aggregated interaction matrix and the eigenvalues $\hat{\mu}(\mathbb{P}_d)$ of the effective temporal interaction matrix is
\begin{align}
    \hat{\mu}(\mathbb{P}_d)=\tau^{-1}\ln\left(1+\alpha(-d,\tau) \mu^*(\mathbb{P}_d)\right) .
\end{align}
with $\alpha(c,\tau) = \big(\exp(c\tau)-1\big)/c$.
\label{1-skeletoneignevalues}
\end{corollary}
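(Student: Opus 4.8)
The plan is to recognize the statement as a direct instantiation of \Cref{eigenvaluerelation}, whose only structural hypothesis is the quadratic identity $M^2 = cM$ shared by every matrix in the multiset. Accordingly, the entire argument reduces to identifying the correct constant $c$ for the consensus interaction matrices $M^{(i)} = -P_d^{(i)}$ and then substituting it into the eigenvalue relation already established. There is essentially no independent analytic content beyond this bookkeeping.

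First I would verify the algebraic identity carefully. Restricted to the $d$ nodes of the group $g_d^{(i)}$, the clique Laplacian in~\cref{1-skeletonlaplacian} can be written as $P_d^{(i)} = d\,\Pi - J$, where $\Pi$ is the orthogonal projection onto the coordinates of the group and $J$ is the all-ones matrix supported on those same coordinates (both zero elsewhere). Using $\Pi^2 = \Pi$, $\Pi J = J\Pi = J$, and $J^2 = dJ$, a direct expansion gives $[P_d^{(i)}]^2 = d^2\Pi - 2dJ + J^2 = d^2\Pi - dJ = d\,(d\Pi - J) = d\,P_d^{(i)}$, which is exactly the relation asserted in the text. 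Because every group has the same cardinality $d$, this identity holds with the \emph{same} constant for all indices $i$, which is precisely what \Cref{eigenvaluerelation} requires of the multiset.

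Next I would translate this into the sign convention of the consensus dynamics. Since $M^{(i)} = -P_d^{(i)}$, we have $[M^{(i)}]^2 = [P_d^{(i)}]^2 = d\,P_d^{(i)} = -d\,M^{(i)}$, so the multiset $\mathbb{P}_d$ satisfies $M^2 = cM$ with $c = -d$. The one point that must be handled with care --- and essentially the only place an error could creep in --- is this sign flip: the shared constant is $-d$ rather than $+d$, precisely because the interaction matrix is the \emph{negative} Laplacian. I would therefore make explicit that the hypothesis of \Cref{eigenvaluerelation} is met by $\mathbb{M} = \mathbb{P}_d$ with this value of $c$.

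Finally I would invoke \Cref{eigenvaluerelation} with $\mathbb{M} = \mathbb{P}_d$ and $c = -d$. Substituting directly into the relation $\hat{\mu} = \tau^{-1}\ln\!\left(1 + \alpha(c,\tau)\,\mu^*\right)$ yields
\begin{equation}
\hat{\mu}(\mathbb{P}_d) = \tau^{-1}\ln\!\left(1 + \alpha(-d,\tau)\,\mu^*(\mathbb{P}_d)\right),
\end{equation}
with $\alpha(-d,\tau) = (\exp(-d\tau)-1)/(-d)$, which is the claimed corollary. No further work is required beyond the substitution, so the main obstacle here is conceptual rather than technical: correctly matching the consensus sign convention to the constant $c$ appearing in the general theorem.
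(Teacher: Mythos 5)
Your proposal is correct and follows essentially the same route as the paper: establish $[P_d^{(i)}]^2 = d\,P_d^{(i)}$, flip the sign to get $[M^{(i)}]^2 = -d\,M^{(i)}$ for $M^{(i)} = -P_d^{(i)}$, and instantiate \Cref{eigenvaluerelation} with $c=-d$. The only difference is that you explicitly carry out the computation the paper dismisses as ``direct computations'' (via the clean decomposition $P_d^{(i)} = d\,\Pi - J$), which is a nice touch but not a different argument.
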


\paragraph{Three-Way Interactions}
\label{ChapterThreeWay}
As derived \Cref{sec:preliminaries}, a linear dynamical process on a three-way interaction system can be rewritten in terms of a pairwise interaction system using a scaled negative motif Laplacian.

In a $d$-group, assuming underlying three-way interactions instead of pairwise interactions, each interaction between two nodes has$(d-2)$ jointly incident 3-way hyperedges. Therefore, the motif Laplacian for a $d$-group $g_d$ is given as:
\begin{align}
T_d = (d-2)P_d.
\label{eq:motif-Laplacian-relationship}
\end{align}
whereas $P_d$ is the Laplacian capturing the pairwise interactions on the $d$-group $g_d$, which was defined in \cref{1-skeletonlaplacian}. The consensus dynamics for three-way interactions in $g_d$ is then given as $\dot{x}=-T_dx$.

We can now derive a similar relationship as in the pairwise case in \Cref{1-skeletoneignevalues}, this time for a set $\mathbb{M}= \mathbb{T}_d$ of interaction matrices $M^{(i)}=-T_d^{(i)} =-(d-2)P_d^{(i)}$. We can use that $[P_d^{(i)}]^2 = dP_d^{(i)}$ to conclude that $[M^{(i)}]^2 = -(d-2)^2[P_d^{(i)}]^2 = -d(d-2)M^{(i)}$ and employ~\Cref{eigenvaluerelation} to obtain the following result:
\begin{corollary}
For a three-way consensus dynamics on a sequence of $d$-groups, given by the set of interaction matrices $\mathbb{T}_d = \{-T_d^{(1)},\ldots,-T_d^{(r)}\}$, the relationship of the eigenvalues $\mu^*(\mathbb{T}_d )$ of the aggregated interaction matrix and the eigenvalues $\hat{\mu}(\mathbb{T}_d )$ of the temporal interaction matrix is given as
\begin{align}
\hat{\mu}(\mathbb{T}_d )=\tau^{-1}\ln\left(1+\alpha(-d(d-2),\tau) \mu^*(\mathbb{T}_d )\right) 
\end{align}
with $\alpha(c,\tau) = \big(\exp(c\tau)-1\big)/c$.
\label{threewayeignevalues}
\end{corollary}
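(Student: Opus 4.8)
The plan is to obtain this statement as an immediate specialisation of \Cref{eigenvaluerelation}, exactly in parallel to how \Cref{1-skeletoneignevalues} was derived in the pairwise case. The only thing that needs checking is that the interaction matrices $M^{(i)} = -T_d^{(i)}$ satisfy the hypothesis $M^2 = cM$ of that theorem, and to read off the correct constant $c$; everything else then follows by substitution.

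To verify the hypothesis, I would start from the motif-Laplacian relationship $T_d^{(i)} = (d-2)P_d^{(i)}$ established in \cref{eq:motif-Laplacian-relationship}, so that $M^{(i)} = -T_d^{(i)} = -(d-2)P_d^{(i)}$. Using the identity $[P_d^{(i)}]^2 = d\,P_d^{(i)}$ (the same quadratic relation already exploited in the pairwise case, which holds because the complete-graph Laplacian restricted to the $d$ active nodes equals $dI - J$ on that block and hence squares to $d$ times itself), I would compute
\begin{align*}
    [M^{(i)}]^2 = (d-2)^2 [P_d^{(i)}]^2 = (d-2)^2 d\, P_d^{(i)} = -d(d-2)\,M^{(i)},
\end{align*}
where the final step re-expresses $P_d^{(i)} = -(d-2)^{-1} M^{(i)}$. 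Thus $M^2 = cM$ holds for every member of the multiset $\mathbb{T}_d$ with the single constant $c = -d(d-2)$.

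With the hypothesis confirmed, I would apply \Cref{eigenvaluerelation} to $\mathbb{M} = \mathbb{T}_d$ with $c = -d(d-2)$. The theorem directly yields $\hat{\mu}(\mathbb{T}_d) = \tau^{-1}\ln(1 + \alpha(c,\tau)\,\mu^*(\mathbb{T}_d))$ with $\alpha(c,\tau) = (\exp(c\tau)-1)/c$, which upon inserting $c = -d(d-2)$ is precisely the claimed relation.

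There is no genuine analytical obstacle here, since the substantive content lives in \Cref{eigenvaluerelation} and in the idempotency-type identity for $P_d$; the corollary is a clean substitution. The one point requiring care is the bookkeeping of the constant — in particular tracking the two sign flips (one from the definition $M^{(i)} = -T_d^{(i)}$, one from re-expressing $[P_d^{(i)}]^2$ through $M^{(i)}$) so that the argument passed into $\alpha$ is $c = -d(d-2)$ rather than, say, $+d(d-2)$ or $-(d-2)^2$. A convenient sanity check is the case $d=3$: there $T_d = P_d$ (a single $3$-edge coincides with a triangle), and the three-way constant $-d(d-2) = -3$ agrees with the pairwise constant $-d$ of \Cref{1-skeletoneignevalues}, as it should.
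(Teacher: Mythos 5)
Your proposal is correct and follows essentially the same route as the paper: verify $[M^{(i)}]^2 = -d(d-2)M^{(i)}$ via the relation $T_d^{(i)} = (d-2)P_d^{(i)}$ and the identity $[P_d^{(i)}]^2 = dP_d^{(i)}$, then apply \Cref{eigenvaluerelation} with $c=-d(d-2)$. Your sign bookkeeping is in fact slightly cleaner than the paper's own intermediate step (which writes $[M^{(i)}]^2 = -(d-2)^2[P_d^{(i)}]^2$, a sign slip that nevertheless lands on the correct constant), and your $d=3$ sanity check against \Cref{1-skeletoneignevalues} is a nice confirmation.
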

We can now use these results on the eigenvalues to compare the convergence speed of the temporal and aggregate consensus dynamics. 

\subsection{Temporal Effects of Group Interactions}
\label{subsec:temporaleffects}
We will now apply the results from the previous section to compare pairwise and three-way interactions on $d$-group sequences to investigate how the temporal group dynamics affect the convergence speed. 

Let $\mathbb{T}_d$ be the multiset of three-way and $\mathbb{P}_d$ the multiset of pairwise interaction matrices of a sequence of $d$-groups. As both matrix types represent interactions on $d$-groups and only differ in their type of interaction, the sets are of the same size $\left|\mathbb{T}_d\right|=\left|\mathbb{P}_d\right|$. Additionally making use of the fact that the aggregated matrix is linear in the $d$-group interaction matrices, \cref{eq:motif-Laplacian-relationship} can be directly used to relate the eigenspace of the aggregate matrices of the three-way and pairwise interactions for a sequence of $d$-groups as
\begin{align}
\mu^*(\mathbb{T}_d )=(d-2)\mu^*(\mathbb{P}_d ).
\end{align}
In the pairwise case, as a $d$-group is a fully connected graph on $d$ nodes then, for the $d$-Laplacian it holds that its largest eigenvalue $\lambda_d \leq d$ \cite{zhang_laplacian_2011}. For the aggregate interaction matrix, this value (now sign-flipped) is only possible for a single simplex interaction. Otherwise, we have that
\begin{align}
-d \leq \mu^*(\mathbb{P}_d )< 0
\end{align}
for the aggregated pairwise interactions and
\begin{align}
-d(d-2) \leq \mu^*(\mathbb{T}_d )<0
\end{align}
for the aggregated three-way interactions. 

The following is an extension to the result for edge-sequences from \cite{masuda_temporal_2013} to more general interaction matrices with the property $M^2=cM$ for $c<0$. It implies that diffusion slows down on temporal networks for these types of interaction matrices. The result can be applied for both pairwise and three-way interaction matrices but is not limited to these forms of interactions.
\begin{theorem}[Convergence rates of temporal vs. aggregate dynamics]
\label{weirdlonglemma}
Consider a multiset $\mathbb{M}=\{M^{(1)},\ldots,M^{(r)}\}$ of interaction matrices for which all $M \in \mathbb{M}$ fulfill the property $M^2=cM$ for a constant $c$ with $c < 0$. 
Let the eigenvalues of the aggregated matrix $M^*$  be bounded by $c \leq \mu^* < 0$. Then we have that the temporal dynamics of the sequence are slower than the aggregate dynamics. Moreover, as the time windows $\tau \rightarrow 0$, the temporal dynamics converge against the aggregate dynamics.
\end{theorem}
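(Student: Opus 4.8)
The plan is to reduce everything to the scalar eigenvalue relation $\hat\mu = f_c(\mu^*,\tau) = \tau^{-1}\ln\!\big(1 + \alpha(c,\tau)\,\mu^*\big)$ supplied by \Cref{eigenvaluerelation}, and then to analyse the single-variable function $f_c$. Since all matrices in $\mathbb{M}$ satisfy $M^2 = cM$, \Cref{eigenvaluerelation} pairs the spectra of $\hat M$ and $M^*$ through $f_c$ along shared eigenvectors; the relaxation speed of a consensus-type dynamics $\dot x = Mx$ is governed by its spectral gap, i.e. the nonzero eigenvalue closest to $0$ from below (all eigenvalues being $\le 0$). Thus ``the temporal dynamics are slower than the aggregate dynamics'' translates into the pointwise claim that every nonzero eigenvalue satisfies $\hat\mu > \mu^*$, so that the temporal mode $e^{\hat\mu t}$ decays more slowly than the aggregate mode $e^{\mu^* t}$. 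Because $\alpha(c,\tau) = (e^{c\tau}-1)/c > 0$ for $c<0$, the map $\mu^* \mapsto f_c(\mu^*,\tau)$ is strictly increasing, hence it preserves the ordering of the spectrum and in particular identifies the slow mode of $M^*$ with the slow mode of $\hat M$; this lets me argue purely at the level of a single eigenvalue pair.

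First I would record the admissible range. For $c \le \mu^* < 0$ one checks that $1 + \alpha\mu^* \in [e^{c\tau},1)$, using the identity $\alpha(c,\tau)\,c = e^{c\tau}-1$; in particular the logarithm is well defined and $\hat\mu = f_c(\mu^*,\tau) < 0$, consistent with a convergent consensus dynamics, and the monotonicity of $f_c$ follows immediately.

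The core step is the pointwise inequality $f_c(\mu^*,\tau) > \mu^*$ for $\mu^*\in(c,0)$. Multiplying by $\tau>0$ and exponentiating, this is equivalent to $g(u) := 1 + \alpha u - e^{\tau u} > 0$ on $(c,0)$, where $u=\mu^*$. I would verify that $g$ vanishes at both endpoints: $g(0) = 0$ trivially, and $g(c) = 1 + (e^{c\tau}-1) - e^{c\tau} = 0$ by the same identity $\alpha c = e^{c\tau}-1$. Since $g''(u) = -\tau^2 e^{\tau u} < 0$, the function $g$ is strictly concave, and a strictly concave function vanishing at the two endpoints of an interval is strictly positive in its interior; this yields $g > 0$ on $(c,0)$ and hence $\hat\mu > \mu^*$ there (with equality at the boundary $\mu^*=c$, corresponding to a single simplex). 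Spotting the second root $g(c)=0$ together with the sign of $g''$ is the crux of the argument, so I expect this concavity observation to be the main — though ultimately short — obstacle.

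Finally, for the $\tau\to 0$ statement I would expand $f_c$ directly. From $\alpha(c,\tau) = (e^{c\tau}-1)/c = \tau + \tfrac{1}{2}c\tau^2 + O(\tau^3)$ one gets $1 + \alpha\mu^* = 1 + \mu^*\tau + O(\tau^2)$, so $\ln(1+\alpha\mu^*) = \mu^*\tau + O(\tau^2)$ and therefore $f_c(\mu^*,\tau) = \mu^* + O(\tau) \to \mu^*$. Applied eigenvalue by eigenvalue, this shows $\hat M \to M^*$ as $\tau\to 0$, so the random temporal dynamics approach the time-aggregated dynamics in the small-window limit.
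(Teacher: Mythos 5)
Your proposal is correct, and it proves the central claim by a genuinely different argument than the paper. Both proofs start from the eigenvalue relation $\hat\mu = f_c(\mu^*,\tau) = \tau^{-1}\ln\bigl(1+\alpha(c,\tau)\mu^*\bigr)$ of \Cref{eigenvaluerelation}, but from there the routes diverge. The paper studies $f_c$ as a function of the window length $\tau$: it computes $\lim_{\tau\to 0} f_c(\mu^*,\tau) = \mu^*$ by l'H\^opital's rule, shows $\lim_{\tau\to\infty} f_c(\mu^*,\tau) = 0$, and invokes monotonicity in $\tau$ so that $f_c$ interpolates between $\mu^*$ and $0$, which yields the slow-down. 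You instead fix $\tau$ and prove the pointwise inequality $\hat\mu \geq \mu^*$ directly in the spectral variable: the equivalent statement $g(u) = 1+\alpha u - e^{\tau u} \geq 0$ on $[c,0]$ follows because $g$ is strictly concave ($g'' < 0$) and vanishes at both endpoints, the root at $u=c$ coming from the identity $\alpha c = e^{c\tau}-1$. Your route buys two things: it is self-contained for every fixed $\tau$, and it sidesteps the monotonicity-in-$\tau$ step, which in the paper is actually stated with the wrong sign --- the paper asserts $f_c$ is ``monotonically decreasing in $\tau$,'' yet its own two limits show the signed eigenvalue rises from $\mu^*$ toward $0$ as $\tau$ grows (it is the magnitude $|\hat\mu|$, i.e.\ the convergence rate, that decreases); your argument never needs this claim. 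Conversely, the paper's route gives extra information your fixed-$\tau$ argument does not: the $\tau\to\infty$ limit, showing the temporal dynamics become arbitrarily slow for long switching windows (though this is not part of the theorem statement). For the $\tau\to 0$ convergence, your Taylor expansion of $\alpha(c,\tau)$ is an equivalent, slightly more elementary substitute for the paper's l'H\^opital computation, and applying it eigenvalue by eigenvalue (justified since, as you note, $\hat M$ and $M^*$ share eigenvectors) gives the same conclusion.
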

\begin{proof}
We know that $f_c(\mu^*,\tau)=\tau^{-1}\ln\left(1+\alpha(c,\tau) \mu^*\right)=\hat{\mu}$ with  $\alpha(c,\tau) = (\exp(c\tau)-1)/c$ represents the relationship of the aggregate and temporal dynamics. This function is monotonically decreasing in $\tau$ because $c<0$. We want to look at the limits of this function in $\tau$. Let us only treat $\tau$ as an argument here and set $= \alpha_c(\tau) := = \alpha(c,\tau)$. 

In order to prove the convergence result for $\tau \rightarrow 0$ we use l'Hopitals' rule as we approach $0$ in the denominator. Let us define $h(\tau)=\ln(1+\alpha_c(\tau)\mu^*)$ and $g(\tau)=\tau$. Then $f_c(\mu^*,\tau)=\frac{h(\tau)}{g(\tau)}$. We know that 
\begin{align}
\lim_{\tau \rightarrow 0} \alpha_c(\tau)=\lim_{\tau \rightarrow 0} \frac{\exp(c\tau)-1}{c}=0
\end{align}
and
\begin{align}
\lim_{\tau \rightarrow 0} \alpha'(\tau)=\lim_{\tau \rightarrow 0} \exp(c\tau)=1
\end{align}
With this, we can now compute 
\begin{align}
\lim_{\tau \rightarrow 0} \frac{h'(\tau)}{g'(\tau)} = \lim_{\tau \rightarrow 0} \frac{\alpha'_c(\tau)\mu^*}{1+\alpha_c(\tau)\mu^*}=\mu^*.
\end{align}

When $\tau \rightarrow \infty$, we have to look how $\alpha_c(\tau)$ behaves in the limit. We have that $\lim_{\tau \rightarrow \infty} \alpha_c(\tau)=\lim_{\tau \rightarrow \infty}(\exp(c\tau)-1)/c$. Because $c < 0$, we have that $\lim_{\tau \rightarrow \infty}\alpha_c(\tau)=- 1/c$.We additionally know that $\alpha_c(0)=0$ and the function is monotonically increasing due to the exponential function and $c<0$.  As  $c \leq \mu^* < 0$, we can immediately see that $0 \leq 1+\alpha_c(\tau) \mu^* \leq 1$  for all $\tau \geq 0$, from which follows that
\begin{align}
 \lim_{\tau \rightarrow \infty}f_c(\mu^*,\tau)=\lim_{\tau \rightarrow \infty}\left(\tau^{-1}\ln\left(1+\alpha_c(\tau) \mu^*\right)\right)= 0.
\end{align}
Moreover, as $\alpha_c(\tau)$ is monotonically increasing with increasing $\tau$ and $\mu^* < 0$, the function $f_c(\mu^*,\tau)=\tau^{-1}\ln\left(1+\alpha_c(\tau) \mu^* \right)$ is monotonically decreasing in $\tau$.
\end{proof}

 \Cref{weirdlonglemma} applies to both the pairwise sequence with $c=-d$ and the three-way interactions with $c=-d(d-2)$. This indicates that the convergence of the dynamics on the temporal network is slower than that on the aggregate network, for both pairwise as well as three-way interactions. In particular, the difference between aggregate and temporal dynamics vanishes for small time steps and grows for large time steps, as  $\lim_{\tau \rightarrow 0}f_c(\mu^*,\tau)=\mu^*$ and  $\lim_{\tau \rightarrow \infty}f_c(\mu^*,\tau)=0$.

In the next section, we further investigate how this difference varies from underlying pairwise to three-way interactions to examine a possible interaction effect between higher-order group dynamics and temporal dynamics.
\subsection{Comparison of Temporal Effects of Pairwise and Three-Way Interactions}
\label{Comparison}
In the previous part, we discovered that both pairwise and three-way interactions on temporal sequences of groups have a higher-order effect of slowing down the convergence of the dynamics. We now want to investigate if there is a difference in these effects which would imply an interaction effect between temporal and higher-order group interactions even for linear dynamics.

We compare the two eigenspaces of the temporal matrices for pairwise and three-way interaction matrices with the same aggregated eigenspace $ \mu^*(\mathbb{P}_d )=\mu^*(\mathbb{T}_d )=: \mu^*$:
\begin{align}
  \hat{\mu}(\mathbb{P}_d )&= \tau^{-1}\ln\left(1+ \alpha(-d,\tau)\mu^{*}\right)  \\
 \hat{\mu}(\mathbb{T}_d )& =\tau^{-1}\ln\left(1+ \alpha(-d(d-2),\tau)\mu^{*}\right).
\end{align} We can derive that the function $g(c):=\alpha(c, \tau)=(\exp(c\tau)-1)/c$ is monotonically decreasing in $c$ for all $c<0$, as we have that $g'(c)=(\tau c-1)\exp(\tau c)/c^2 <0$ for all $c <  0$ and the limit $g(c)\nearrow 0$. Combining this and the fact that $\mu^*<0$ (which was motivated in the beginning of \Cref{subsec:temporaleffects}) we can conclude that $\alpha(-d,\tau)\mu^* <\alpha(-d(d-2),\tau)\mu^* $ for $d >3$. As the logarithm is monotonically increasing, it follows that
\begin{align}
 \hat{\mu}(\mathbb{P}_d )  <  \hat{\mu}(\mathbb{T}_d ).
\end{align}
This means that the temporal dynamics with underlying three-way interactions are always slower than the pairwise temporal dynamics for the same aggregated dynamics. This difference is determined by a scaling factor of $k=d-2$ in the nonlinear function $\alpha(-kd,\tau)$ and therefore grows with an increasing dimension of the groups. The scaling factor results from the number of 3-edges that are jointly incident to two nodes in a $d$-group. Therefore, the difference just starts to occur for group sequences with $d>3$. If we look at sequences of 3-edges ($d=3)$, we have  $\hat{\mu}(\mathbb{P}_d )  =  \hat{\mu}(\mathbb{T}_d )$ because each pair of nodes is only part of one 3-edge and therefore the reduction of the three-way system to a weighted graph in each time window does not differ from the pairwise system. As a result, the relaxation times of the temporal dynamics do not differ. 

We can conclude that the difference between temporal and aggregate dynamics of group interactions increases in the case of three-way in comparison to pairwise interactions. This effect depends nonlinearly on the dimension of the group, which reveals an interplay between the two different higher-order effects. We can observe this process for three different dimensions in \Cref{SameAggregate}.

\begin{figure}
 \centering
 \includegraphics[width=0.48\textwidth]{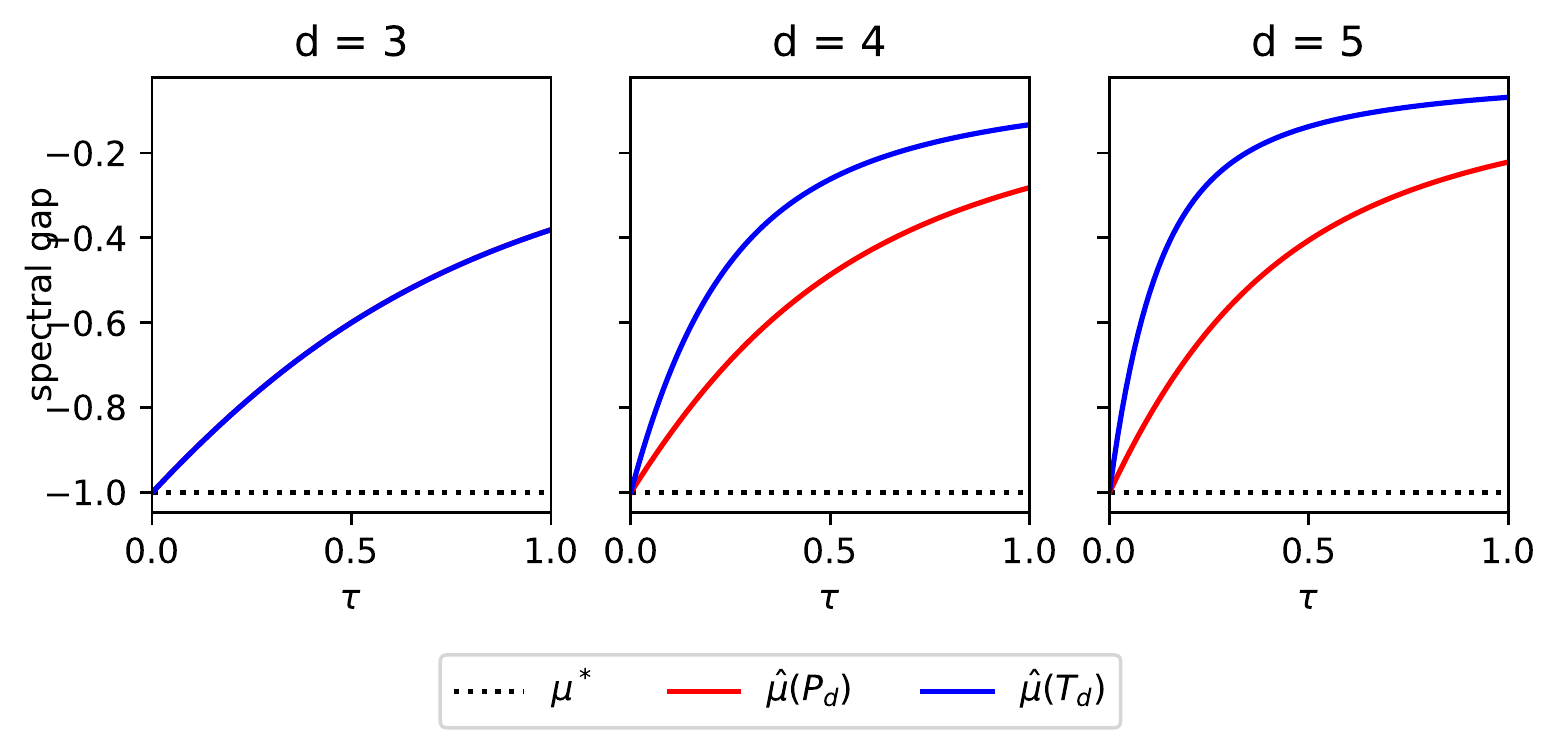}
 \caption[Spectral gap of pairwise and three-way interaction matrices with equal aggregated network]{\textit{Spectral gap of pairwise and three-way interaction matrices with equal aggregated network:} Spectral gap of the temporal (solid line) dynamics of sequences of $d$-groups with both underlying pairwise (red) and three-way interactions (blue) with the same aggregated dynamics. One can observe that both the temporal pairwise and three-way interactions slow down the dynamics compared to the aggregated dynamics, but the effect is stronger for three-way interactions for $d>3$ (for sequences of 3-edges (left), the relaxation times of the temporal dynamics do not differ), scaled by the nonlinear function $\alpha(-kd,\tau)$ with  $k=(d-2)$.  We can observe this in the fact that the slowing effect nonlinearly increases for growing dimension.}
  \label{SameAggregate}
 \end{figure}

\section{Nonlinear consensus on temporal hypergraphs}
\label{sec:nonlinear}
In the previous section, we considered linear interaction dynamics on hypergraphs and provided a theoretical analysis for the impact of (randomly) temporally switching interactions on the convergence properties of an (average) consensus dynamics.
This focus on the convergence speed is justified since for the considered linear average consensus dynamics the final consensus value is not altered irrespective of whether we consider a time-varying or a static hypergraph.

However, this is not the case if we consider a nonlinear consensus dynamics in each time window: in this case, the average opinion within the system is not an invariant of the dynamics, even for a static hypergraph~\cite{neuhauser_multibody_2020}. 
In particular, these opinion shifts are dependant on asymmetries in the system that could, in the static case, occur either on a fully connected system, purely due to the initialisation of the nodes, or due to a clustered hypergraph topology \cite{neuhauser_multibody_2020}. In the following, we always consider initial node states $x(0) \in [0,1]$. Let us denote the initial average of the node states as $\bar{x}(0)$.
In the first case, a consensus formation on a fully connected hypergraph with an initial node state average of $\bar{x}(0) \neq 0.5$ lead to a dominance of the initial majority, in contrast to a scenario with symmetric initialisation of $\bar{x}(0)=0.5$.
In the next section, we will describe the clustered hypergraph case in more detail. 

\subsection{Consensus dynamics on clustered hypergraphs}
\begin{figure*}
\centering
\includegraphics[width=\textwidth]{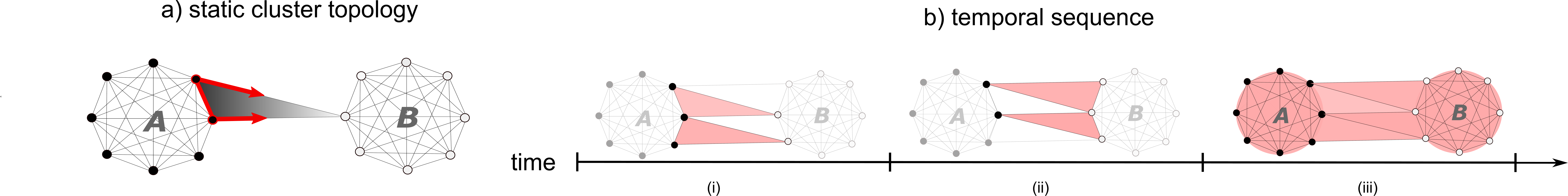}
\caption[Temporal systems of clustered hypergraphs]{\textit{Temporal systems of clustered hypergraphs:} In a) we see the case of a static clustered hypergraph topology of two clusters A and B with initial node states $1$ and $0$ and connected by a 3-edge which is oriented towards cluster B. In this example, group A dominates the consensus process due to the reinforcing dynamics in the connecting hyperedge where A forms the majority. This is indicated by the red arrows. In b) we see a temporal sequence of such a clustered hypergraph, this time with symmetric connection, meaning that an equal number of connecting 3-edges is oriented towards cluster A and B. First, the connecting 3-edges in which group A forms the majority interact in timeslot $(i)$, followed by the other connecting hyperedges in time period $(ii)$. Finally, the whole system is active in $(iii)$. This sequence is referred to as a "First-mover A" scenario, as it captures a setting in which group A has the first-mover advantage of forming the local majority in the connecting subgroups first.}
\label{fig:temporal_sequence}
\end{figure*}

In \cite{neuhauser_multibody_2020} we consider a setting with two clusters which we call $A$ and $B$, see \Cref{fig:temporal_sequence} a).
In this setup there exist two types of hyperedges: If all the nodes of a hyperedge $E_\alpha$ are contained in either cluster $A$ or in cluster $B$, we call $E_{\alpha}$ a \emph{cluster} hyperedge. 
If $E_\alpha$ contains nodes from both clusters, we call $E_{\alpha}$ a \emph{connecting} hyperedge.
A connecting hyperedge is called \emph{oriented} towards cluster $A$, if less of the hyperedge nodes are part of cluster $A$ than of cluster $B$.

We emphasize that all the hypergraph edges are not directed.
Nonetheless, in conjunction with the initial states of the nodes, the distribution of the orientations of the hyperedges can influence the outcome of the consensus formation.
As an extreme example, consider a setup where all nodes in cluster $A$ have initial state  $x_A(0)=1$ and nodes in cluster $B$ initial state $x_B(0)=0$. 
Now consider a connecting hyperedges oriented towards cluster $B$. This scenario is visualised in \Cref{fig:temporal_sequence} a).
Due to the shared opinion in cluster $A$, the two nodes in $A$ reinforce their influence in~\cref{eqn:our_function} on the node in $B$. Vice versa, each node in $A$ is less influenced due to the large state difference between the nodes~\cite{neuhauser_multibody_2020}.
Therefore, we can observe a directional influence between the nodes in the connecting hyperedges, where the opinion of cluster $A$ will be ``propagated'' to cluster $B$ (indicated by the red arrows in \Cref{fig:temporal_sequence} a)).

As shown in \cite{neuhauser_multibody_2020}, depending on the overall orientation of the connecting triangles, this can lead to a result in which the opinion of one cluster ultimately dominates the opinion of the other class:
Specifically, if there is an imbalance in the number of oriented 3-edges towards $A$ or $B$, then one group will have a stronger influence on the final opinion of the consensus process.
In the next part, we will discuss how such consensus dynamics can additionally be influenced by the temporal ordering of the interactions. 

\subsection{First-mover advantages in group discussions}
Let us now consider the above example of a 3-regular hypergraph consisting of two equally sized fully connected clusters with a balanced connection, meaning we have an equal amount of hyperedges oriented towards group $A$ and group $B$. 
We will further assume that the initial states are binary with an initial mean equal to $\bar{x}(0)=0.5$, i.e., a scenario in which even in the nonlinear static case shifts in the final consensus value may only occur due to overlapping connecting hyperedges which can happen due to the random placement. However, if we run many simulations, the mean of $\bar{x}=0.5$ is conserved on average.  
We now investigate, if the temporal ordering of the dynamical interactions can influence the final outcome of the consensus process.

To motivate the above setup, consider a setting in which two parties are debating about a certain issue. 
The members of each party are all in contact with each other, but across party lines, not all people are debating with each other.
Instead, there are certain subgroups of representatives which are meeting to discuss.
These correspond to the connecting 3-edges which involve members of both sides. 
Though overall the oriented edges are balanced and no group has an advantage, in every connecting 3-edge, one party holds the majority.
We thus refer to those 3-edges oriented towards cluster $B$ as A-majority subgroups, and to 3-edges oriented towards cluster $A$ as B-majority subgroups, respectively

In the following, we will compare three different scenarios in which the different subgroups are active at different times.
\begin{enumerate}
\item \textbf{Aggregated system.} This is the baseline in which the hypergraph is static and all interactions are present simultaneously.
\item \textbf{First-mover A}. First, all A-majority subgroups interact for some time, then all B-majority subgroups, then the whole hypergraph interacts.
\item \textbf{First-mover B}. First, all B-majority subgroups interact for some time, then all A-majority subgroups, then the whole hypergraph interacts.
\end{enumerate}
The "First-mover A" scenario is visualised in \Cref{fig:temporal_sequence} b). 

We display the results of the simulations on a hypergraph consisting of two fully connected clusters with $10$ nodes each, a binary initialisation ($x_A(0)=1$, $x_B(0)=0$) and with $20$ connecting 3-edges. 
These 3-edges are randomly placed, such that there is an equal amount oriented towards cluster $A$ and towards cluster $B$.
We then simulate the aggregated system and the two temporal systems described above in the following way: We let each of the majority subgroups interact (as described above) for a fixed time window $\tau$ ($\tau = 200 \cdot 10^{-3}$).
After this, all interactions are active and we let the dynamics run on the whole system until it reaches consensus. For comparison, we show the results of these experiments for both linear and nonlinear consensus dynamics.

For the nonlinear 3CM dynamics, we consider a scaling function of $s(x)=\exp(\kappa x)$ with $\kappa =-100$, so that pairs of similar nodes exert a strong influence on other nodes.  
(the linear dynamics corresponds to $\kappa=0$).
The results are displayed in \Cref{aggregatetemporal} both for the nonlinear reinforcing dynamics (bottom row) and the linear case (top row). 
As we would expect, there is no shift in the final consensus value for the linear dynamics, neither for the static nor the temporal case, as they can be reduced to pairwise dynamics that conserve the average opinion in the system. As discussed in \cref{sec:linear}, the temporal hypergraph setting in the linear case has only an impact on the convergence speed.

The aggregated dynamics show a small shift due to locally different clustering of the randomly placed 3-edges (bottom left). 
This final opinion shift can now be altered by the temporal ordering of the subgroups:
specifically, we can see a first-mover advantage in that the final consensus value the system reaches is closer to the initial opinion of the group which holds the majority in the subgroups which interact first. We can therefore examine a joint effect of temporal ordering and multi-way interactions that can not be found in any of the hypergraph projections.

To understand why this happens consider the ``first-mover A'' scenario as an example.
Observe that the average opinion in the majority-A subgroups is dominated by group A, which leads to a (nonlinear) shift of the average opinion towards the initial states of group A within these hyperedges. 
When the topology switches the initial states for the set of majority-B hyperedges are thus already a lot closer to the states of the nodes of group A, which reduces the possible influence of group B. 
Therefore, for this temporal ordering, the consensus value is mainly influenced by the first-mover group A (bottom center). 
An analogous explanation can be given for the ``first-mover B'' scenario.

\begin{figure*}
    \centering
    \begin{subfigure}[t]{0.33\textwidth}
\centering
\includegraphics[width=\textwidth]{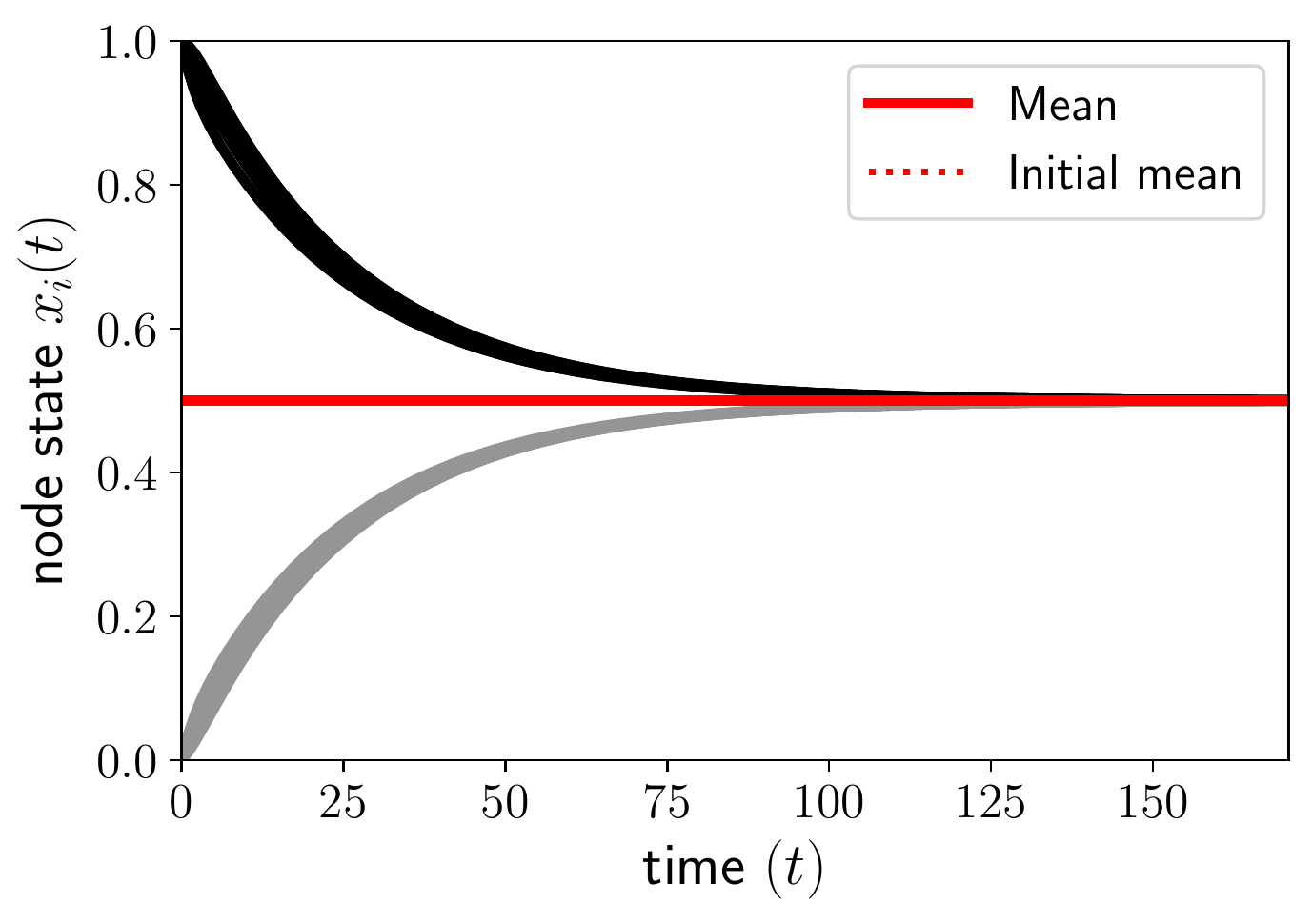}
\caption{linear aggregated}
\end{subfigure}
\begin{subfigure}[t]{0.33\textwidth}
\includegraphics[width=\textwidth]{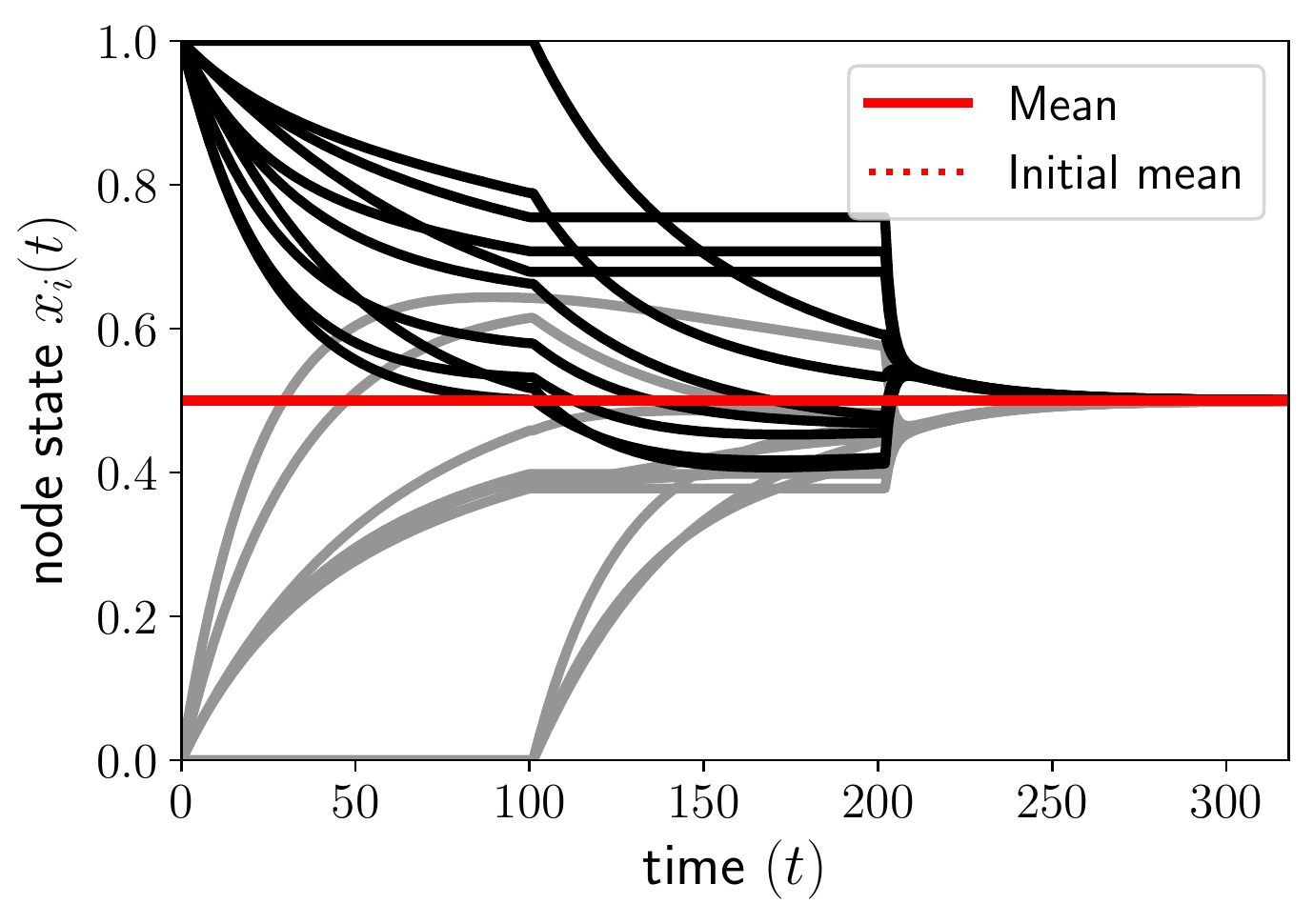}
\caption{\textcolor{blue}{linear first-mover group A}}
\end{subfigure}
    \begin{subfigure}[t]{0.33\textwidth}
\centering
\includegraphics[width=\textwidth]{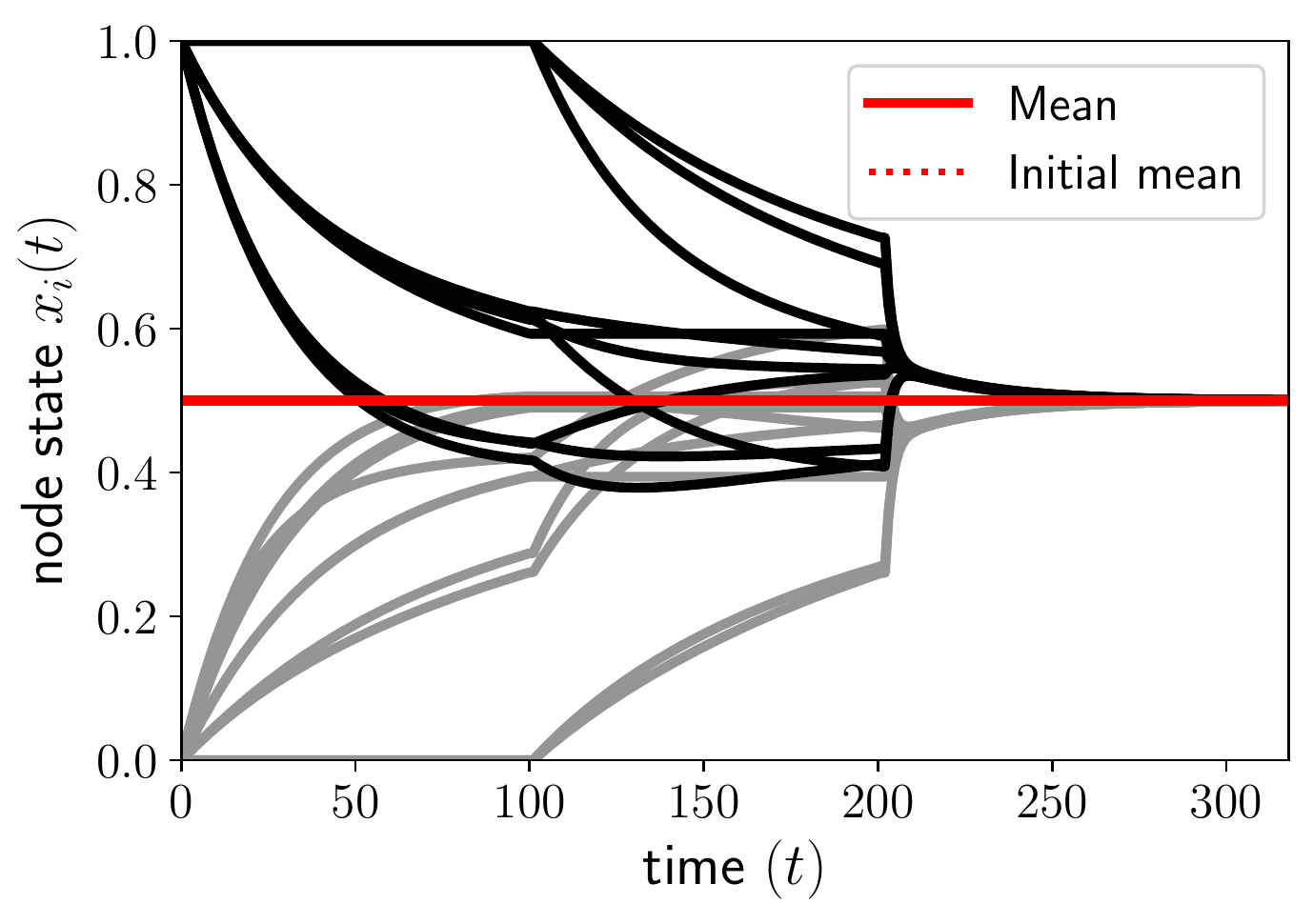}
\caption{\textcolor{ForestGreen}{linear first-mover group B}}
\end{subfigure}
\begin{subfigure}[t]{0.33\textwidth}
\includegraphics[width=\textwidth]{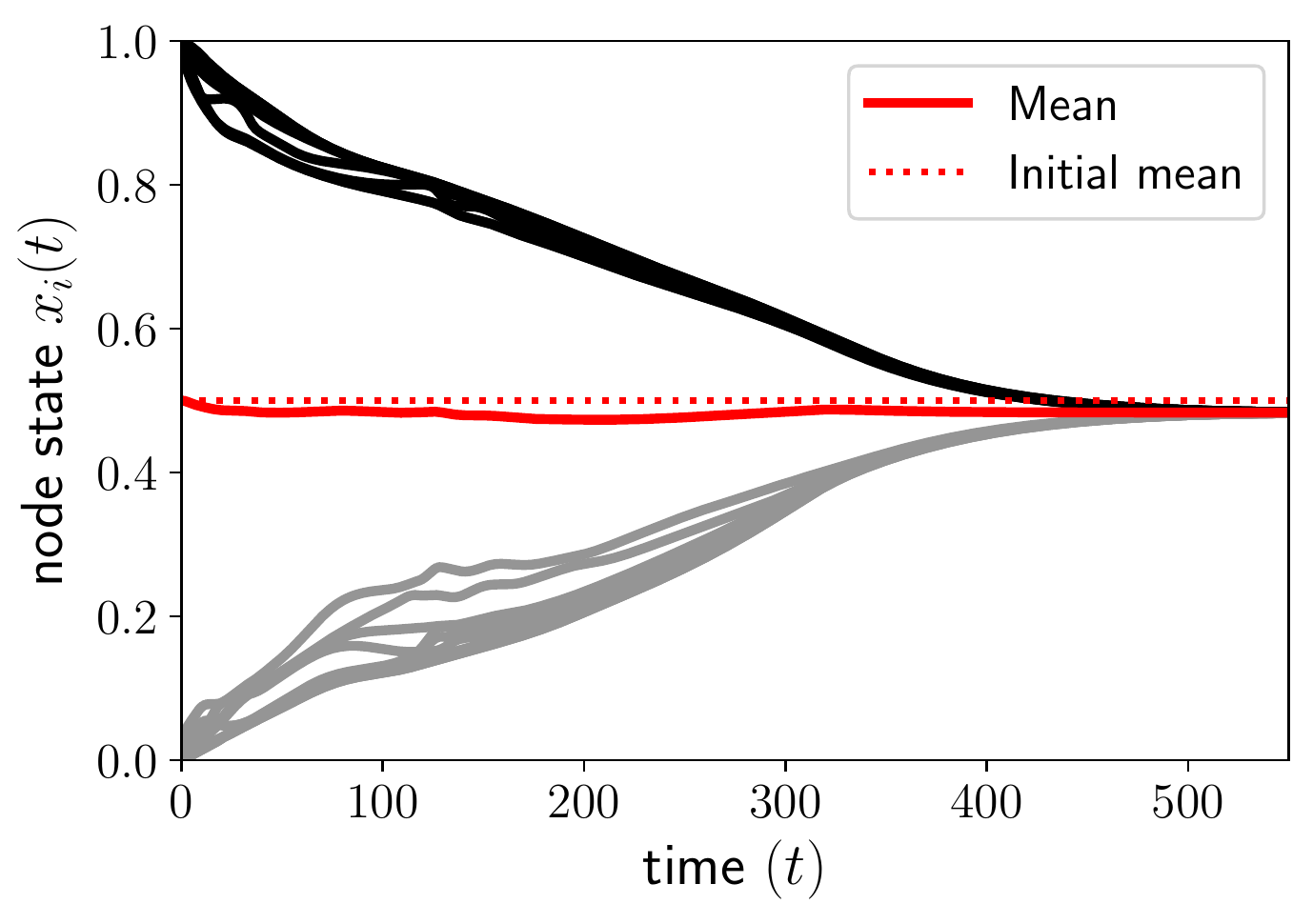}
\caption{nonlinear aggregated}
\end{subfigure}
    \begin{subfigure}[t]{0.33\textwidth}
\centering
\includegraphics[width=\textwidth]{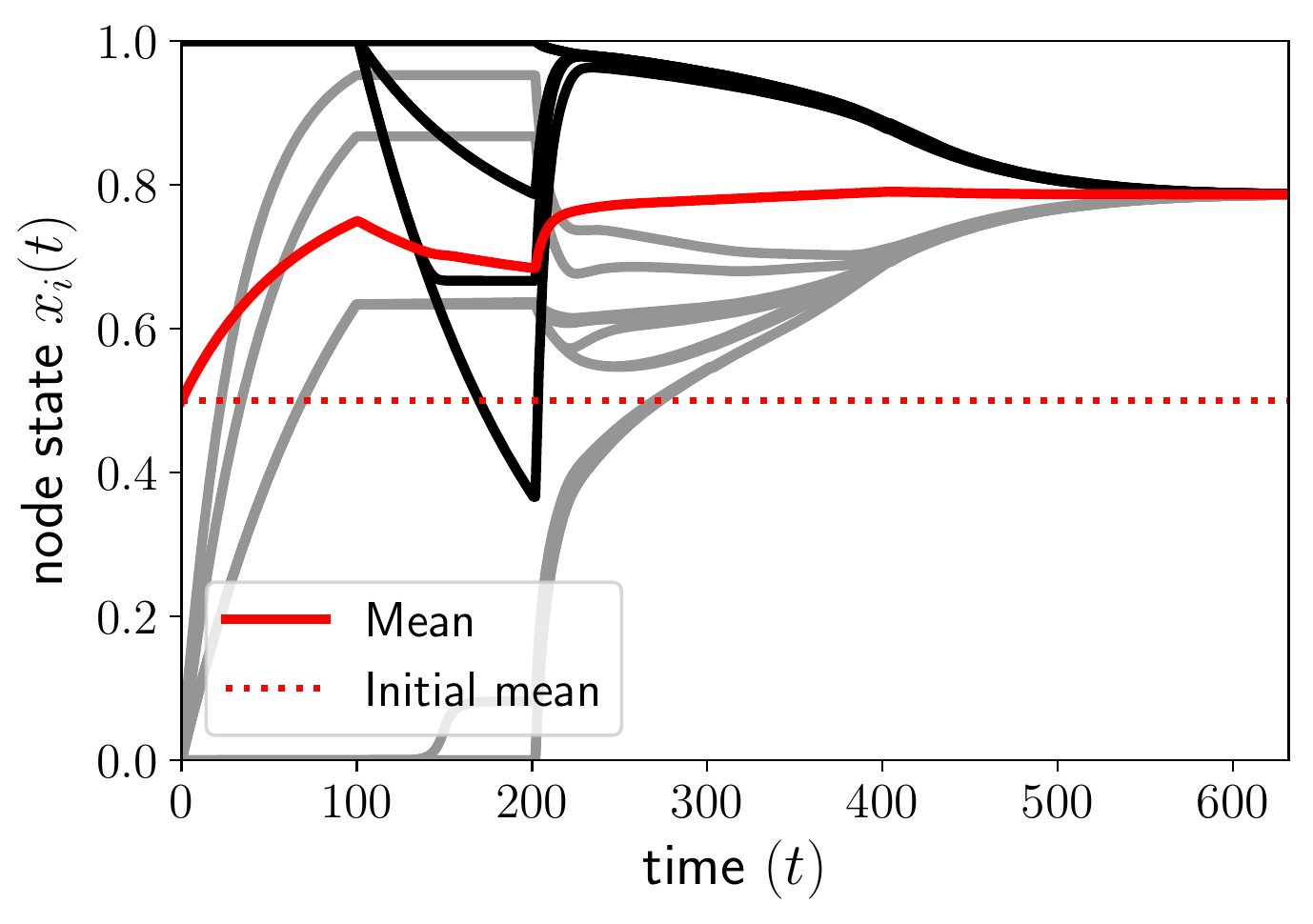}
\caption{\textcolor{blue}{nonlinear first-mover group A}}
\end{subfigure}
\begin{subfigure}[t]{0.33\textwidth}
\includegraphics[width=\textwidth]{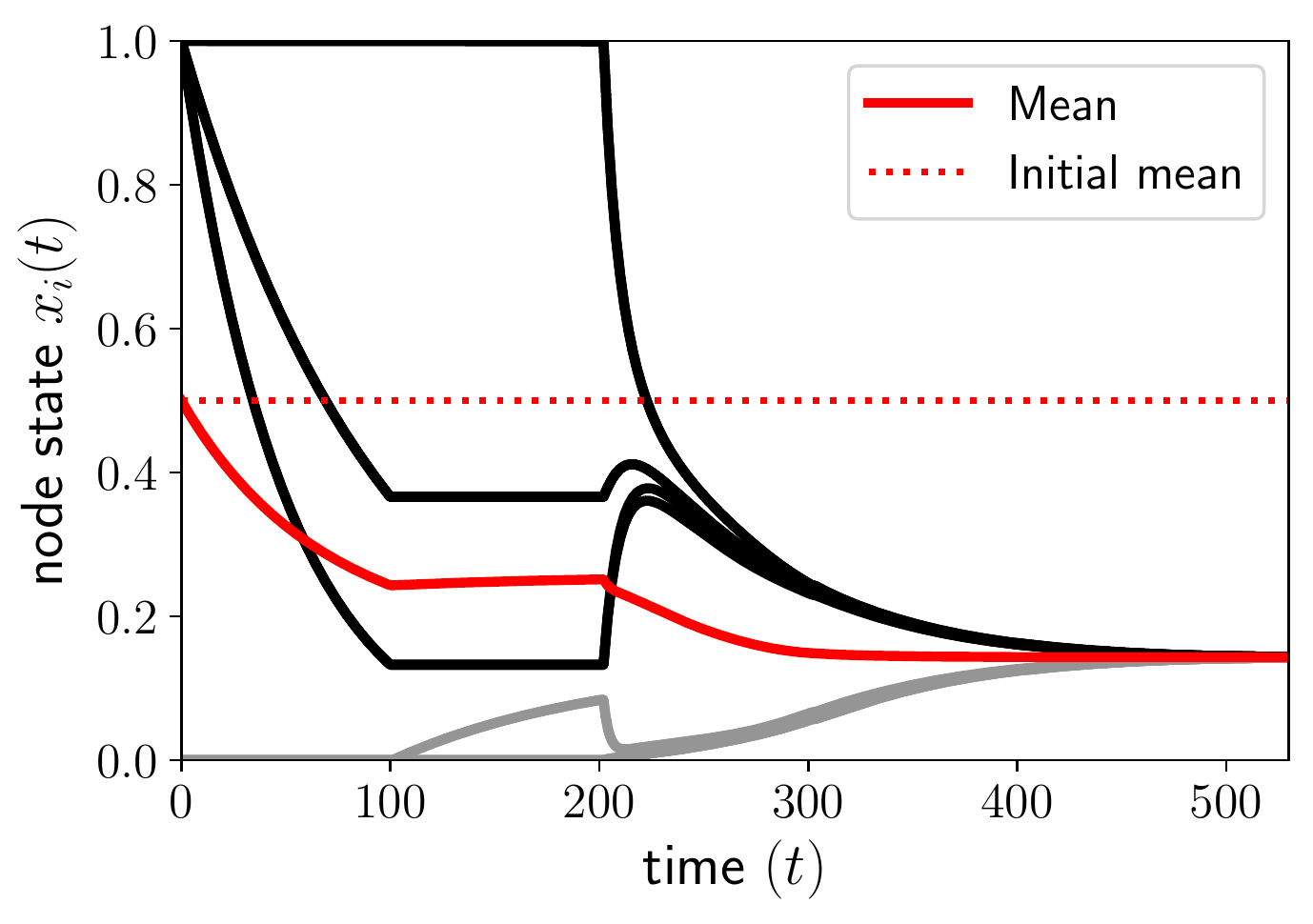}
\caption{\textcolor{ForestGreen}{nonlinear first-mover group B}}
\end{subfigure}
\caption[Nonlinear vs. linear temporal dynamics]{\textit{Nonlinear vs. linear temporal dynamics:} We perform simulations on a hypergraph consisting of two fully connected clusters of 10 nodes, connected by 20 3-edges, from which 10 are oriented towards each cluster, respectively. The nodes of cluster A are initialised with $1$ (black lines), cluster B with $0$ (grey lines). In the linear case, both the aggregated and the temporal systems conserve the initial mean of the node states (top row). The aggregated nonlinear dynamics show a small shift due to locally different clustering of the 3-edges (the connection can hardly be fully symmetric), but this shift is altered by temporal ordering of the subgroups (bottom row). The consensus process is  dominated by the group which has the first-mover advantage. This is the group which forms the majority in the subgroups which are interacting first. The caption colours correspond to the line colours of the respective scenarios in \Cref{temporal_vs_p}.}
\label{aggregatetemporal}%
\end{figure*}

Our simulations confirm that in the case of nonlinear dynamics, there can be a significant difference between temporal and aggregated dynamics on hypergraphs, not only in convergence speed as in the linear case, but also in the final consensus value. 
This is due to the interplay of the multi-way topology and the temporal ordering, which in this case leads to local majorities dominating the consensus process if they interact first. 
In the static case, this dominance was only possible due to an asymmetry in the connection between the two clusters (more connecting hyperedges being oriented towards one group). In clustered hypergraphs with a symmetric connection, the shifts could only occur due to the random placement of the connecting hyperedges, which can sometimes overlap and therefore lead to an asymmetry at random. In the temporal case, the shifts are solely determined by the first-mover groups in this symmetric case.
In the next section, we will investigate the relationship between the cluster topology and the temporal ordering more systematically. 

\subsection{Interaction of cluster topology and temporal effect}
In order to quantitatively analyse the interaction of cluster topology and temporal ordering, we again perform numerical simulations on two fully connected clusters,  each consisting of $10$ nodes, with binary initialisation.
We now connect the clusters with $20$ randomly placed $3$-edges, such that a fraction  $p \in [0,1]$ of $3$-edges are oriented towards cluster $A$ and the rest towards cluster $B$. 
We thus deviate from the symmetric case from before, which corresponds to $p=0.5$. 
We then simulate the first-mover A, first-mover B and aggregated dynamics as defined before on these hypergraphs.

In \Cref{temporal_vs_p}, we show the results of our simulations averaged over $20$ instances in which the initial condition is fixed ($x_A(0)=1, x_B(0)=0$) and the hyperedges are placed at random according to $p$. The error bars display one standard deviation.  
In \Cref{temporal_vs_p} (left), we observe a shift in the final consensus value towards the initial value in cluster $A$ (or cluster $B$, respectively) for the aggregated dynamics, which agrees with the results of the static case in \cite{neuhauser_multibody_2020}. 
The direction of this shift depends on the orientation of the connecting 3-edges, quantified by $p$. For $p=0$, all the connecting 3-edges are oriented towards cluster $B$ and the initial opinion of cluster $A$ thus dominates the dynamics. On the contrary, for  $p=1$ all connecting 3-edges are oriented towards cluster $A$, therefore we observe a maximal influence of the initial value of cluster $B$. 

This qualitative shift is the same for both first-mover scenarios, but the effects of the topology can be reinforced if the first-mover group is the same as the topologically dominating group. 
On the other hand, the effect of the asymmetric topology that leads to an advantage of one group can be overthrown (for $p=0.5)$ or at least diminished ($p=0.2 $,$p=0.8$), if the opposite group has the first-mover advantage. 

This also affects the rate of convergence, as shown in \Cref{temporal_vs_p} (centre). 
Generally, convergence is faster when the initial configuration is very asymmetric. 
Moreover, convergence is significantly faster when topological and temporal dominance of a group align (e.g. group A first-mover effect and $p<0.5$, such that most 3-edges have a group A-majority).

\begin{figure*}
    \centering
    \begin{subfigure}[t]{0.33\textwidth}
\centering
\includegraphics[width=\textwidth]{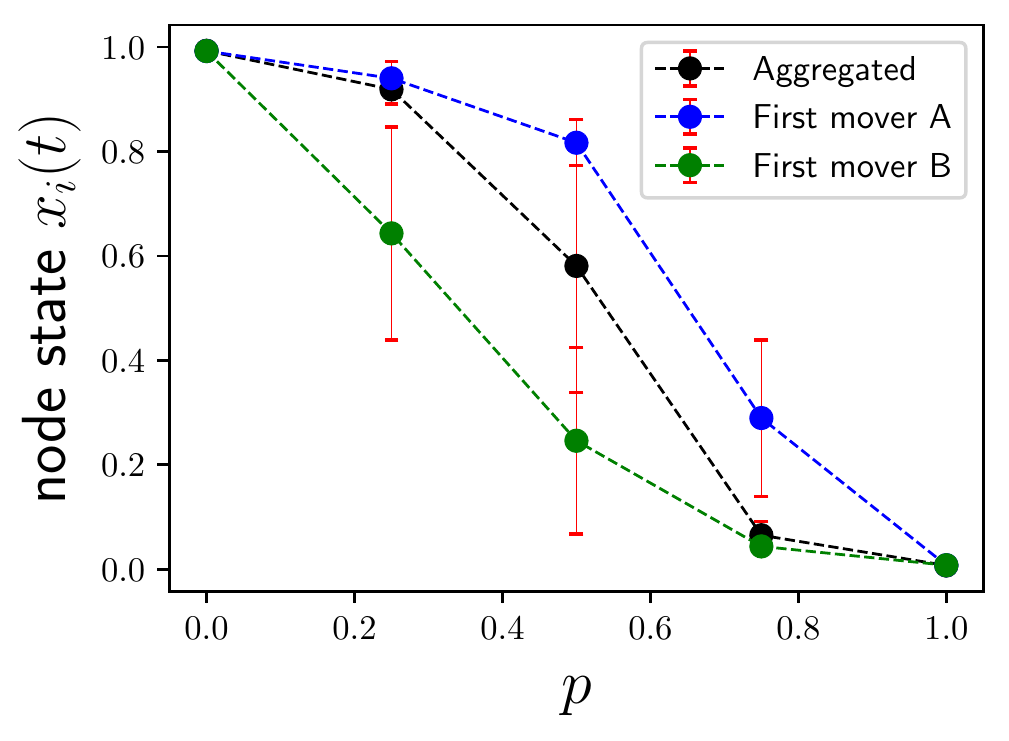}
\caption{Consensus value}
\end{subfigure}
\begin{subfigure}[t]{0.33\textwidth}
\includegraphics[width=\textwidth]{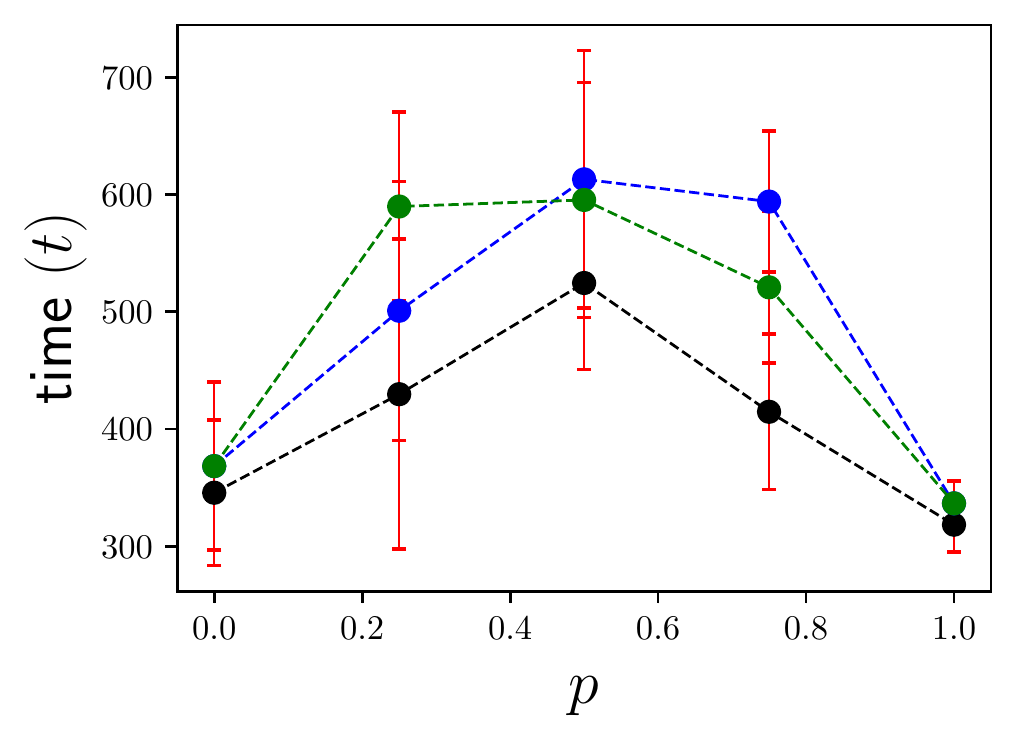}
\caption{Time until consensus}
\end{subfigure}
\begin{subfigure}[t]{0.33\textwidth}
\centering
\includegraphics[width=\textwidth]{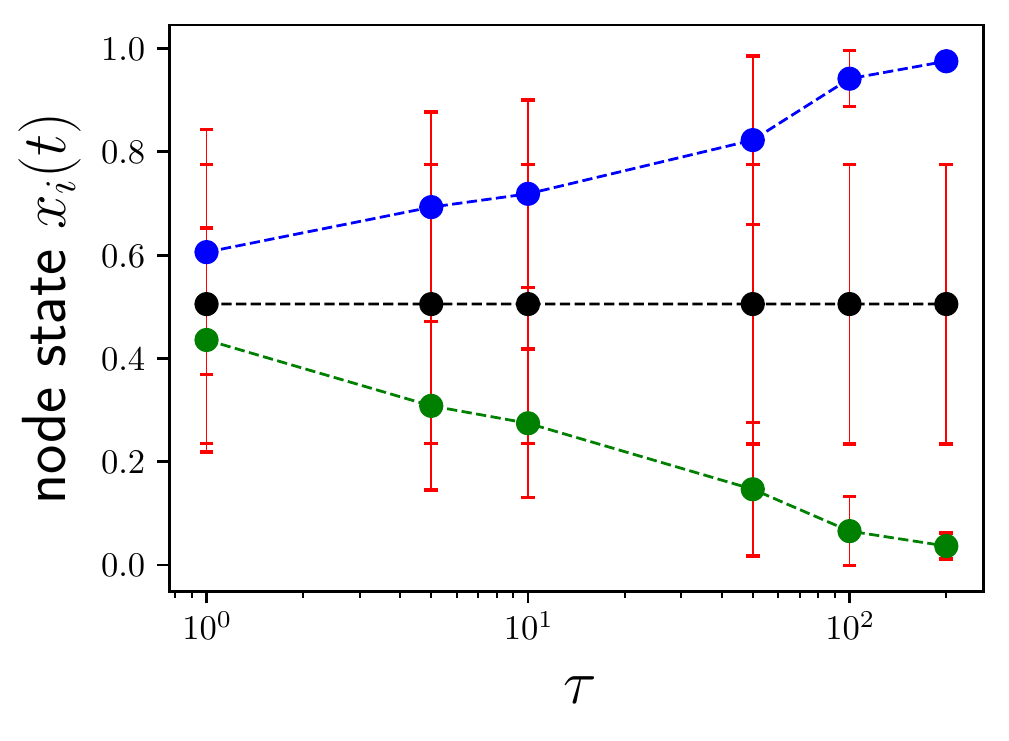}
\caption{Dependence on time scale $\tau$}
\end{subfigure}
\caption[Consensus value, convergence speed and dependence on timescale]{ \textit{Consensus value, convergence speed and dependence on timescale:} We compute the final consensus value, averaged over 10 simulations, where the error bars display one standard deviation (left). As the fraction of 3-edges directed from cluster B to cluster A increases, so does the consensus value towards the initial state in cluster B. This effect of the topology is reinforced, if the first-mover group is the same as the topologically dominating group (e.g. $p<0.5$ and first-mover A blue line) and diminished, if the opposite group has the first-mover advantage (e.g. $p<0.5$ and first-mover B green line). In the symmetric case of $p=0.5$, the shift is solely determined by the temporal ordering, i.e. by which group has the first-mover advantage. The line colours correspond to the caption colours of the respective scenarios in \Cref{aggreagetemporal}.
The rate of convergence is significantly faster when the initial configuration is very asymmetric, that is, extreme values of $p$ (centre). Moreover, the convergence is sped up if the orientation of the 3-edges and the first-mover group align. Additionally, we can see that the convergence value is dependant on the timescale $\tau$ (right). The shorter the timescales of the interactions the more the temporal networks converge to the aggregated dynamics.}
\label{temporal_vs_p}
\end{figure*}

Moreover, as in the case of linear dynamics, we want to know what happens if we let the interaction windows $\tau$ of the temporal snapshots decrease. 
As in the initial simulations, we consider a balanced topology by fixing $p=0.5$. We can see in \Cref{temporal_vs_p} (right), that the temporal dynamics converge to the aggregated dynamics for decreasing $\tau$. This effect is in line with the results of the linear dynamics in Figure \ref{SameAggregate}.

\section{Discussion}
\label{sec:conclusion}
In this work, we investigated the combined effect of two types of higher-order dimensions for networks: their temporality and their multi-way interactions. In particular, we focused on consensus dynamics on temporal hypergraphs. In order to identify the effects of the higher-order representations, we compared the dynamics with appropriate projections that flatten either the temporal or the multi-way component, or both.
Initially, we focused on the linear case.
We extended a key technical insight from~\cite{masuda_temporal_2013} for random temporal graphs to hypergraphs, by combining it with our recent result~\cite{neuhauser_multibody_2020, neuhauser2020opinion,neuhauser2021consensus} that a linear consensus dynamics on hypergraphs can be rewritten as re-scaled pairwise dynamics. 
We then compared the relaxation time of the temporal and aggregated dynamics for underlying linear pairwise and three-way interaction. 
Our analysis reveals that consensus dynamics with 3-way (time-varying) interactions is slowed down in comparison to (time-varying) pairwise interactions.
This slow-down happens \emph{in addition} to the slow-downs that can be observed when considering consensus dynamics on temporal vs. aggregated, static networks.
We can conclude that there is an interesting interplay between multi-way and temporal higher-order models. This interplay is already observable for linear interaction functions, but only has an effect on the convergence speed and not on the final consensus value to which the dynamics converges. 

We then extended our analysis to nonlinear consensus dynamics. 
In particular, we generalised the nonlinear consensus model introduced in~\cite{neuhauser_multibody_2020} to the temporal settings and observed that the consensus value of the temporal system differs from the consensus on the aggregated, static hypergraph. 
More specifically, we can observe a first-mover advantage: groups that are in the local majority in early-active hyperedges have a higher influence on the final consensus value. This mechanism is only due to the temporal ordering of the hyperedges and cannot be observed in the dynamics of the hypergraph projections, neither on the aggregated hypergraph nor for the linear dynamics on the reduced network.

Generally, our results indicate that there is an important interplay between the temporal and multi-way interaction facets of higher-order models. 
In this specific case, both dimensions have individual influences.
However, there is also a joint influence on the overall dynamics of the system, which leads to different results, depending on the nonlinearity of the dynamics. In addition to a difference in convergence speed in the linear case, the nonlinear case can thus lead to a first-mover advantage, describing a situation in which certain groups dominate the final consensus value by the order in which they appear in the temporal interaction sequence. This influences group dynamics which may only be represented by a temporal hypergraph. 
This suggests that when working with higher-order models, the different facets of the model should not be considered completely separately. 

In the future, it would be interesting to analyse how much this interplay can influence real-world systems and how much specific interaction orderings, which can be found in real data, are potentially changing the outcome of opinion dynamics qualitatively. 

\section*{acknowledgements}
Michael T. Schaub and Leonie Neuhäuser acknowledge funding by the Ministry of Culture and Science (MKW) of the German State of North Rhine-Westphalia (“NRW Rückkehrprogramm”). Renaud Lambiotte acknowledges support from the EPSRC Grants No.EP/V013068/1 and EP/V03474X/1.

\urlstyle{same}

\bibliographystyle{ACM-Reference-Format}
\bibliography{bib}

\newpage
\section{Appendix}
\subsection{Calculation of the effective matrix of the temporal dynamics}
\label{apx:Lhat}
In the following, we outline how we study the difference between the aggregate and temporal dynamics, which is based on the approach used by Masuda et al. in \cite{masuda_temporal_2013}. When we study linear time-switched systems of the form
\begin{subequations}
\begin{align}
    \dot{x}(t) &= -L^{(1)} x(t) &0 \leq t \leq \tau, \\
    \dot{x}(t) &= -L^{(2)}x(t) &\tau \leq t \leq 2\tau, \\
  \vdots & \nonumber\\
  \dot{x}(t) &= -L^{(r)}x(t) &(r-1)\tau \leq t \leq r\tau,\\
  \text{with } &\quad  x(0) = x_0.
\end{align}
\end{subequations}
these systems always refer to a specific, ordered interaction matrix sequence, in this case to $S=(-L^{(1)}, \cdots, -L^{(r)})$. For this sequence, the system can be solved as
 \begin{align}
x(r\tau)=\exp(-\tau L^{(r-1)}) \cdots \exp(-\tau L^{(0)})x_0 := Z(S,\tau)x_0 .
\end{align}
However, this solution can also be achieved by applying a constant, effective matrix $\hat{L}(S)=(r\tau)^{-1}\ln(Z(S,\tau))$ of the sequence $S$ such that
 \begin{align}
x(r\tau)=(r\tau)^{-1}\ln(Z(S,\tau)) x_0 = \hat{L}(S)x_0.
\end{align}

If we now go beyond specific sequences and assume that the hypergraph dynamics in the time window $\tau$ is drawn at random from a multiset of $r$ (possibly repeating) Laplacian matrices $\mathbb{L} = \{L^{(1)},\ldots,L^{(r)}\}$, this leads to an expected state of this random temporal hypergraph system after one time-period of length $\tau$ of
\begin{align}\label{eq:random2}
    \langle x(\tau) \rangle = \frac{1}{|\mathbb{L}|} \sum_{\ell=1}^r\exp\left(-\tau L^{(\ell)}\right) x(0)= \hat{Z}(\tau)x(0).
\end{align}
%TODO: replace \vert M\vert with 1/r?
Note that \cref{eq:random2} will only give the correct solution for the expected value for a randomly switching system up to the switching time $\tau$. In general, time-switched systems cannot be described by simple eigenvalues when it comes to convergence rates. To work around this limitation we can, as we have calculated above, interpret \cref{eq:random2} as the solution to the linear system
\begin{equation}\label{eq:expected_value2}
    \dot{x}(t)=-\hat{L}x(t) \quad \text{with} \quad  \hat{L}=\tau^{-1}\ln(\hat{Z}(\tau)),
\end{equation}
where $\hat L$ is an effective Laplacian interaction matrix for one time-period $\tau$. The point of \cref{eq:expected_value2} is thus to generate an \emph{effective} system which gives us a convergence rate for a particular time-horizon.
This enables us to consider an average convergence rate for a fixed time-horizon $\tau$ for the time-switched system, which we can use for comparison to the static, time-aggregated topology:
\begin{align}
    x(t)=-L^*x(t), \quad \text{with} \quad L^*=\frac{1}{|\mathbb{L}|}\sum_{\ell=1}^{r} L^{(\ell)}.
\end{align}

\subsection{Proof of Theorem 3.1}
\label{apx:proofThm3.1}
\begin{proof}
Let $M$ be a single interaction matrix that fulfils the property $M^2=cM$. We can conclude that for all $i$, $M^i=c^{i-1}M$. Therefore, the temporal evolution operator of the interaction that lasts time $\tau$ can be written as
\begin{align*}
\exp(\tau M) &= \sum_{i=0}^{\infty}\frac{(\tau M)^i}{i!} 
= I + \sum_{i=1}^{\infty}\frac{(\tau M)^i}{i!} = I +\left( \sum_{i=1}^{\infty}\frac{\tau^{i} c^{i-1}}{i!}\right) M \\ 
             &= I + \frac{1}{c} \left(\sum_{i=1}^{\infty}\frac{(c \tau)^i}{i!}\right) M = I + \frac{1}{c}\left( \sum_{i=0}^{\infty}\frac{(c\tau)^i}{i!}-1 \right)M 
\\&= I+\frac{\exp(c\tau)-1}{c} M=: I+\alpha(c,\tau)M
\end{align*}
Using this, we can write the effective temporal matrix for the multiset $\mathbb{M}$ as
\begin{align}
    \hat{M}&= \tau^{-1}\ln\left(\hat{Z}(\tau)\right) = \tau^{-1}\ln\left(|\mathbb{M}|^{-1}\sum_{\ell=1}^r\exp\left( \tau M^{(\ell)}\right) \right)\\
&=\tau^{-1}\ln\left(I + \alpha(c,\tau) |\mathbb{M}|^{-1} \sum_{\ell=1}^r M^{(\ell)} \right) \\
&=\tau^{-1}\ln\left(I + \alpha(c,\tau) M^*\right) .
\end{align}
Therefore, the two matrices have the same eigenspaces and we can relate the eigenvalues as
\begin{align}
\hat{\mu}(\mathbb{M})=\tau^{-1}\ln\left(1+\alpha(c,\tau) \mu^*(\mathbb{M})\right)=:f_c(\mu^*,\tau).
\end{align}
\end{proof}

\end{document}